\newcommand{\sayan}[1]{\textcolor{blue}{#1}}
\newcommand{\kristina}[1]{\textcolor{red}{#1}}
\definecolor{asparagus}{rgb}{0.53, 0.66, 0.42}
\newcommand{\rmv}[1] {}
\newcommand{\num}[1]{\relax\ifmmode \mathbb #1\else $\mathbb #1$\fi}
\newcommand{\nnnum}[1]{\relax\ifmmode
  {\mathbb #1}_{\geq 0} \else ${\mathbb #1}_{\geq 0}$
  \fi}
\newcommand{\npnum}[1]{\relax\ifmmode
  {\mathbb #1}_{\leq 0} \else ${\mathbb #1}_{\leq 0}$
  \fi}
\newcommand{\pnum}[1]{\relax\ifmmode
  {\mathbb #1}_{> 0} \else ${\mathbb #1}_{> 0}$
  \fi}
\newcommand{\nnum}[1]{\relax\ifmmode
  {\mathbb #1}_{< 0} \else ${\mathbb #1}_{< 0}$
  \fi}
\newcommand{\plnum}[1]{\relax\ifmmode
  {\mathbb #1}_{+} \else ${\mathbb #1}_{+}$
  \fi}
\newcommand{\nenum}[1]{\relax\ifmmode
  {\mathbb #1}_{-} \else ${\mathbb #1}_{-}$
  \fi}
\newcommand{\reals}{{\num R}}                    
\newcommand{\A}{\mathcal{A}}
\newcommand{\B}{\mathcal{B}}
\newcommand{\D}{\mathcal{D}}
\newcommand{\M}{\mathcal{M}}
\renewcommand{\P}{\mathcal{P}}
\newcommand{\R}{\mathcal{R}}
\newcommand{\SL}{\mathcal{\pi}}
\newcommand{\U}{\mathcal{U}}
\newcommand{\SC}{{\mathtt S}}
\newcommand{\UC}{{\mathtt U}}
\newcommand{\Unsafe}{\mathcal{B}}
\newcommand{\supp}[1]{{\rm{supp}\mathit{(#1)}}}
\newcommand{\tuple}[1] {\langle #1 \rangle}
\newcommand{\plant} {plant}
\newcommand{\policy} {\mathsf{SP}}
\newcommand{\stpolicy} {\mathsf{StationarySP}}
\newcommand{\sfpolicy} {\mathsf{SafeSP}}
\newcommand{\unsafe} {\mathsf{Unsafe}}
\newcommand{\prob} {\mathbb{P}}
\newcommand{\runs} {\mathsf{Runs}}
\newcommand{\finruns} {\runs_f}
\newcommand{\prefix}[2][i]{#2[:#1]}
\newcommand{\cyl}[1] {C_{#1}}
\newcommand{\set}[1] {\{#1\}}
\newcommand{\setpred}[2] {\set{#1\: |\: #2}}
\newcommand{\lead}{\ell}
\newcommand{\ego}{}
\newcommand{\reachrta}{\textsf{ReachRTA}}
\newcommand{\rlrta}{\textsf{RLRTA}}
\newcommand{\simrta}{\textsf{SimRTA}}
\newcommand{\lstate}{\mathit{lstate}}
\newcommand{\reach}{\mathit{Reach}}
\newcommand{\acc}{\textsf{Acc}}
\newcommand{\dubins}{\textsf{Dubins}}
\newcommand{\air}{\textsf{Aircraft}}
\newcommand{\pitch}{\gamma}
\newcommand{\yaw}{\psi}
\newcommand{\pitchInput}{\Gamma}
\newcommand{\yawInput}{\omega}
\newcommand{\building}{\textsf{Dubins+O}}
\newcommand{\groundCollision}{\textsf{Air}}
\newcommand{\rf}{\textrm{ref}}
\newcommand{\err}{\varepsilon}
\lstdefinelanguage{pseudocode}{
	basicstyle=\scriptsize,
	keywordstyle=\bf \scriptsize,
	identifierstyle=\it \scriptsize,
	mathescape=true,
	tabsize=20,
	xleftmargin=4.0ex,
	sensitive=false,
	columns=fullflexible,
	keepspaces=false,
	basewidth=0.05em,
	moredelim=[il][\rm]{//},
	moredelim=[is][\sf \figuresize]{!}{!},
	moredelim=[is][\bf \figuresize]{*}{*},
	keywords={automaton, algorithm, and,
		break,
		choose,const,continue, components,
		discrete, do,
		eff, external,else, elseif, evolve, end, each, exit,
		fi,for, forward, from, find,
		hidden,
		in,input,internal,if,invariant, initially, imports,
		let,
		mode,
		or, output, operators, od, of,
		pre,
		return,
		such,satisfies, stop, signature, simulation, sample,
		trajectories,trajdef, transitions, that,then, type, types, to, tasks,
		variables, vocabulary,
		when,where, with,while},
	emph={set, seq, tuple, map, array, enumeration},
	literate=
	{(}{{$($}}1
	{)}{{$)$}}1
	{\\in}{{$\in\ $}}1
	{\\preceq}{{$\preceq\ $}}1
	{\\subset}{{$\subset\ $}}1
	{\\subseteq}{{$\subseteq\ $}}1
	{\\supset}{{$\supset\ $}}1
	{\\supseteq}{{$\supseteq\ $}}1
	{\\forall}{{$\forall$}}1
	{\\le}{{$\le\ $}}1
	{\\ge}{{$\ge\ $}}1
	{\\gets}{{$\gets\ $}}1
	{\\cup}{{$\cup\ $}}1
	{\\cap}{{$\cap\ $}}1
	{\\langle}{{$\langle$}}1
	{\\rangle}{{$\rangle$}}1
	{\\exists}{{$\exists\ $}}1
	{\\bot}{{$\bot$}}1
	{\\rip}{{$\rip$}}1
	{\\emptyset}{{$\emptyset$}}1
	{\\notin}{{$\notin\ $}}1
	{\\not\\exists}{{$\not\exists\ $}}1
	{\\ne}{{$\ne\ $}}1
	{\\to}{{$\to\ $}}1
	{\\implies}{{$\implies\ $}}1
	{<}{{$<\ $}}1
	{>}{{$>\ $}}1
	{=}{{$=\ $}}1
	{~}{{$\neg\ $}}1
	{|}{{$\mid$}}1
	{'}{{$^\prime$}}1
	{\\A}{{$\forall\ $}}1
	{\\E}{{$\exists\ $}}1
	{\\/}{{$\vee\,$}}1
	{\\vee}{{$\vee\,$}}1
	{/\\}{{$\wedge\,$}}1
	{\\wedge}{{$\wedge\,$}}1
	{=>}{{$\Rightarrow\ $}}1
	{->}{{$\rightarrow\ $}}1
	{<=}{{$\Leftarrow\ $}}1
	{<-}{{$\leftarrow\ $}}1
	{~=}{{$\neq\ $}}1
	{\\U}{{$\cup\ $}}1
	{\\I}{{$\cap\ $}}1
	{|-}{{$\vdash\ $}}1
	{-|}{{$\dashv\ $}}1
	{<<}{{$\ll\ $}}2
	{>>}{{$\gg\ $}}2
	{||}{{$\|$}}1
	{[}{{$[$}}1
	{]}{{$\,]$}}1
	{[[}{{$\langle$}}1
	{]]]}{{$]\rangle$}}1
	{]]}{{$\rangle$}}1
	{<=>}{{$\Leftrightarrow\ $}}2
	{<->}{{$\leftrightarrow\ $}}2
	{(+)}{{$\oplus\ $}}1
	{(-)}{{$\ominus\ $}}1
	{_i}{{$_{i}$}}1
	{_j}{{$_{j}$}}1
	{_{i,j}}{{$_{i,j}$}}3
	{_{j,i}}{{$_{j,i}$}}3
	{_0}{{$_0$}}1
	{_1}{{$_1$}}1
	{_2}{{$_2$}}1
	{_n}{{$_n$}}1
	{_p}{{$_p$}}1
	{_k}{{$_n$}}1
	{-}{{$\ms{-}$}}1
	{@}{{}}0
	{\\delta}{{$\delta$}}1
	{\\R}{{$\R$}}1
	{\\Rplus}{{$\Rplus$}}1
	{\\N}{{$\N$}}1
	{\\times}{{$\times\ $}}1
	{\\tau}{{$\tau$}}1
	{\\alpha}{{$\alpha$}}1
	{\\beta}{{$\beta$}}1
	{\\gamma}{{$\gamma$}}1
	{\\ell}{{$\ell\ $}}1
	{\\TT}{{\hspace{1.5em}}}3
}
\lstdefinelanguage{pseudocodeNums}[]{pseudocode}
{
	numbers=left,
	numberstyle=\tiny,
	stepnumber=2,
	numbersep=4pt
}
\begin{document}



\pagestyle{fancy}
\title{Searching for Optimal Runtime Assurance via Reachability and Reinforcement Learning\thanks{Miller and Zeitler made equal contributions and the other authors are listed alphabetically.}}
\author{Kristina Miller\inst{1,*} 
\and 
Christopher K. Zeitler\inst{2,*} 
\and
William Shen\inst{1} 
\and
Kerianne Hobbs\inst{3}  
\and
Sayan Mitra\inst{1} 
\and
John Schierman\inst{3}  
\and 
Mahesh Viswanathan\inst{1} 
\institute{$^1$University of Illinois at Urbana-Champaign \\
$^2$Rational CyPhy Inc. \\
$^3$Air Force Research Laboratory}
\email{\small \{kmmille2,wshen15,mitras,vmahesh\}@illinois.edu} \\
\email{\small ckzeitler@gmail.com} \\
\email{\small \{john.schierman.1,kerianne.hobbs\}@us.af.mil}
}
\maketitle
\thispagestyle{fancy}
\begin{abstract}
A runtime assurance system (RTA) for a given plant enables the exercise of an untrusted or experimental controller  while assuring safety with a backup (or safety) controller. The relevant computational design problem is to create a logic that assures safety by switching to the safety controller as needed, while maximizing some performance criteria, such as the utilization of the untrusted controller. Existing RTA design strategies are well-known to be overly conservative and, in principle, can lead to safety violations.  In this paper, 
we formulate the optimal RTA design problem and present a new approach for solving it. Our approach relies on reward shaping and reinforcement learning. It can guarantee safety and leverage machine learning technologies for scalability. We have implemented this algorithm and present experimental results comparing our approach with state-of-the-art reachability and simulation-based RTA approaches in a number of scenarios using aircraft models in 3D space with complex safety requirements. Our approach can guarantee safety while increasing utilization of the experimental controller over existing approaches. 
\end{abstract}

\section{Introduction}
\label{sec:intro}

A well-designed {\em Runtime Assurance System (RTA)\/} ensures the safety of a control system while  switching between an untrusted controller and a safety controller to maximimize some performance criteria. The  motivation for studying the RTA design problem is straightforward (see the system architecture shown in Figure~\ref{fig:arch}):  The plant is a state machine $\M$ with a specified set of unsafe states ($\B$). The {\em safety\/} ($\SC$) and the {\em untrusted controllers\/} ($\UC$)  determine the transitions in $\M$. Roughly, the  optimal RTA problem is to design a {\em switching policy\/} $\pi$ that chooses the controller at each step with the twin goals of (a) keeping the closed-loop system ($\M$ with  $\pi$) safe {\em and\/} (b) optimizing some other performance objective, such as maximally using the untrusted controller.

RTA is seen as a key enabler for autonomy in safety-critical aerospace systems~\cite{RTA-AFC2010,RTA-AAS2020,Hobbs2021RunTA,RTA-JAX-abs-2209-01120}. 
RTA systems have  enabled sandboxing and fielding of experimental technologies for 
spacecraft docking~\cite{DunlapHMH22},
learning-enabled taxiing~\cite{cofer-taxi-20}, air-collision avoidance~\cite{cofer-rta-22}, flight-critical control~\cite{RTA-AFC2010},  autonomous aero-space systems~\cite{RTA-AAS2020}, and multi-agent formation flight for UAVs~\cite{RTA-Distt2013-BakHuang,sandbox11}, all within harsh environments, while maintaining safety and verifiability.
 %
 %
 More than 200 papers have been written on this topic since 2020.

A natural  switching policy is to use lookahead or forward simulations~\cite{bak2009system,wadley2013development,hobbs2018space}. From a given state $q$, if the simulations of the closed loop system with the untrusted controller ($\M$ + $\UC$), up to a time horizon $T>0$, are safe, then $\UC$ is allowed to continue; otherwise, the policy switches to $\SC$. This strategy is employed, for example,  in Wadley et al.~\cite{wadley2013development} for an air collision avoidance system. To cover for the uncertainties arising from the  measurement of the initial state $q$ and the model $\M$ of the plant, forward simulation can be replaced with computation of the {\em reachable set\/}, $\reach^\UC(\hat{Q},T)$,  of $\M + \UC$ up to time horizon $T$ from the uncertain initial set $\hat{Q}$. If the computed reachable set $\reach^\UC(\hat{Q},T)$ does not intersect with $\B$ (unsafe set), then the unverified controller is allowed to continue; otherwise, there is a switch to $\SC$. 
In some studies~\cite{hobbs2018space,sandbox11}, it is shown that this is a viable strategy for collision detection and safe operation of systems.

\begin{figure}[h!]
	\centering
	\includegraphics[width=0.5\textwidth]{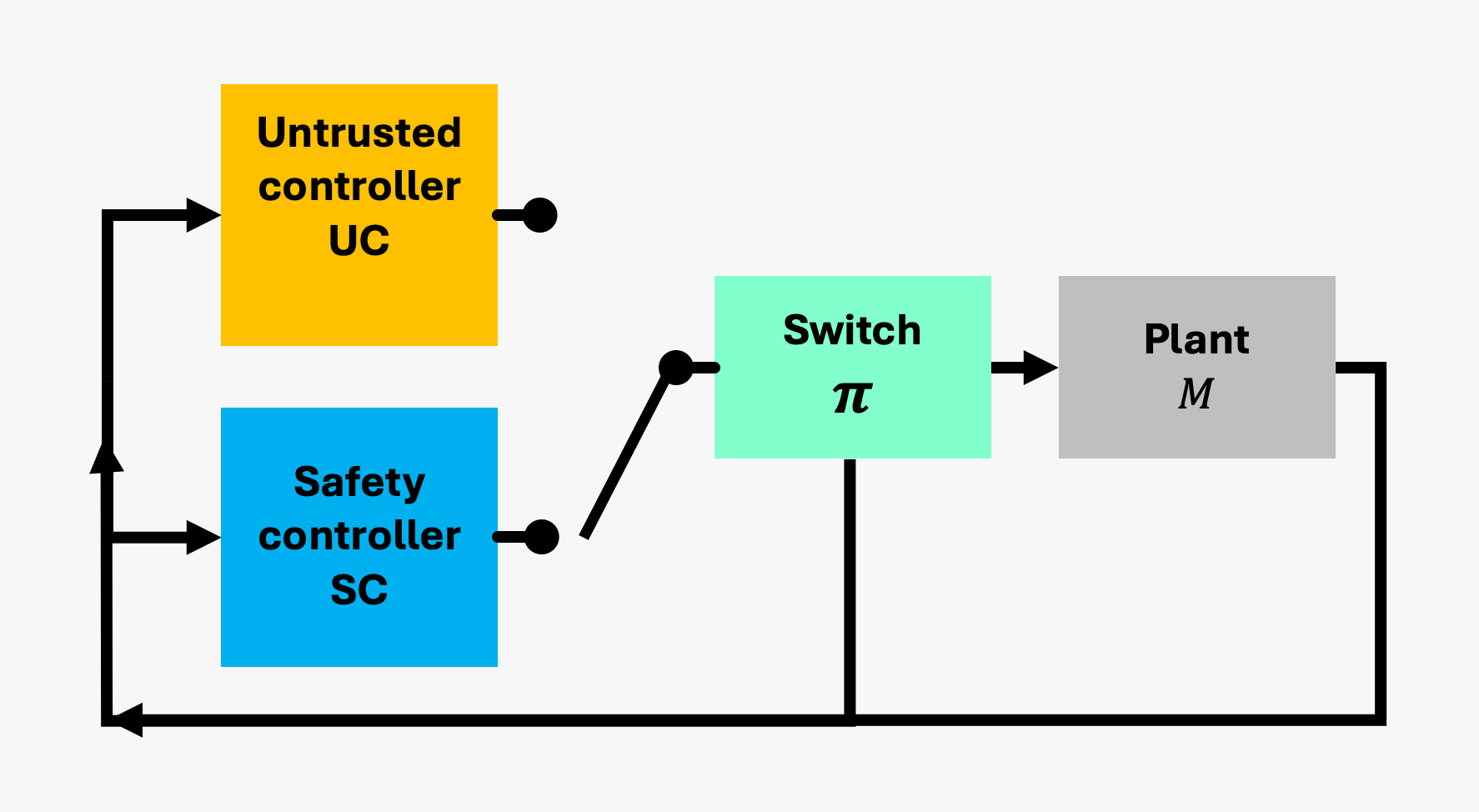}
	\caption{\small Runtime Assurance (RTA) system architecture.}
	\label{fig:arch}
\end{figure}
 
It has been known that these popular policies actually do not guarantee safety of the closed loop system. We illustrate this with counter examples in Section~\ref{sec:basicRS}. Because of inertia and delays, the safety controller may be doomed to violate safety  by the time the switching policy hands off control to it. It has also been known that, instead of switching when $\reach^\UC(\hat{Q},T)$ exits the safe set $\B^c$, the switch should occur when it exits the {\em recoverable set\/}, an invariant subset of safe states from which the safety controller $\SC$ is guaranteed to preserve safety. We show with another counter example (Example~\ref{ex:recoverable-suboptimal}) that using the recoverable set-based switching is safe but not optimal in the sense that it may rule out   perfectly safe opportunities for using $\UC$.

Observing that the popular strategies for designing RTA strategies may not be safe or optimal, we put forward an alternative perspective on the RTA design problem.
The requirement of maximally utilizing the untrusted controller can be specified in different ways---for example, using counters or temporal logics. More generally, we can see this as a problem of maximizing certain rewards defined on the transitions of $\M$ while assuring hard requirements like safety. 
This leads to a generalized formulation of the  RTA design problem: Given a plant automaton  $\M$ with a set of actions $A$ (corresponding to the choice of different controllers, possibly including but not restricted to $\SC$ and $\UC$), a reward structure $r$, and  an unsafe set $\B$,  the goal is to find a (possibly randomized and history dependent) switching policy $\pi$, so that the resulting executions are always safe and maximize the expected reward. 

We show that this constrained optimization problem for $(\M, \B)$ $\textendash$ maximizing expected reward while guaranteeing safety $\textendash$ has an optimal controller switching policy that is \emph{deterministic} and depends only on the current state. Such policies are called \emph{stationary}. The existence of optimal policies that are also stationary is not a given, especially in a context when multiple objectives are being considered. For example, the problem of finding an optimal switching policy that guarantees a target state is reached may not have \emph{any} optimal policy, let alone a stationary one (see example and discussion in Section~\ref{sec:conc}). We then consider the problem of algorithmically designing an optimal policy that guarantees safety. This problem has received much attention recently, and there have been proposals to adapt reinforcement learning to solve it (see related work discussion in Section~\ref{sec:related}). In this paper, we take a slightly different approach by reducing the RTA design problem to that of finding a switching policy that optimizes a \emph{single} objective function, which encodes the requirement of being safe  within a modified reward function. Thus, our reward shaping method allows one to use off-the-shelf, vanilla reinforcement learning methods to solve the RTA problem.

In summary, the contributions of this work are as follows: 
\begin{itemize}
    \item We address the generalized RTA design problem of searching for safe optimal policies over the family of randomized and history-dependent switching policies for automata with rewards. 
    \item We show that there exists a reward shaping strategy such that the  modified, reward-shaped, automaton $\M'$ will have a  stationary optimal policy. And,  if the  optimal memoryless policy $\pi$ for $\M'$ has a non-negative reward, then $\pi$ is also a safe optimal policy for the original automaton $\M$. 
    \item We  implement the proposed reward-shaping\textendash based RTA design procedure using standard reinforcement learning libraries, and we empirically compare this $\rlrta$ against the existing approaches. $\rlrta$ can find safe switching policies in complex scenarios involving leader-follower vehicles and obstacles, in one, two, and three dimensions. In most scenarios, $\rlrta$ improves the utilization of the untrusted controller significantly.
\end{itemize}

\subsection{Related Work}
\label{sec:related}

The idea of RTA is related to myriad other concepts such as the Simplex method and its variations~\cite{sha2001using}, supervisory control~\cite{ramadge1987supervisory}, runtime monitoring, and shielding~\cite{alshiekh2018safe}. The computational challenge of implementing all of these ideas lies in solving the problem we address.
These papers~\cite{rta-cps-all,RTA-VV-SC-FCS-2008,Hobbs2021RunTA} provide an excellent overview of the state of the art and its challenges. 

\begin{table}[!ht]
    \centering
    \small
    \begin{tabular}{|l|l|c|c|c|}
    \hline
        Method  & Decision Technique & Safe   & Optimal  & Model-free  \\ \hline
        Black-box Simplex~\cite{MehmoodSBSS22} & Reachability & $\checkmark^*$  & $\mathsf{x}$ & $\mathsf{x}$ \\ \hline
        Sandboxing~\cite{bak2011sandboxing} & Reachability & $\mathsf{x}$ & $\mathsf{x}$ & $\mathsf{x}$ \\ \hline
        SOTER~\cite{SOTER} & Reachability & $\checkmark$ & $\mathsf{x}$ & $\mathsf{x}$\\ \hline
        Active Set Invariance Filtering~\cite{ames2019control,gurriet2018online} & Control barrier funcs & $\checkmark$ & $\mathsf{x}$ & $\mathsf{x}$ \\ \hline
        Optimal RTA  (our work)  & Reward shaping, RL & $\checkmark$ & $\checkmark$ & $\checkmark$ \\ \hline
        Shielding~\cite{alshiekh2018safe,konighofer2020shield} & Game solving & $\checkmark$ & $\checkmark$ & $\mathsf{x}$ \\ \hline
    \end{tabular}
    \vspace{0.25cm}
    \caption{\small A classification of solutions of the RTA design problem. {\em Safe\/} means that the method is guaranteed to assure unbounded time safety of the overall system. {\em Optimal} means that the method can  design switching policies that maximizes additional objectives (other than safety). {\em Model-free\/} means that the method does not require analytical models for the plant and the controllers. The $\checkmark^*$ annotation in the ``Safe'' column for the Black-box Simplex technique is to indicate that safety is not guaranteed but can be ensured if additionally all the safe states form a recoverable set for the safety controller.}
\end{table}

Reachability analysis has been used for RTA extensively~\cite{BakHAC15,bak2011sandboxing,musau2022icaa}. The  idea is to compute the set of  reachable states from the current state (typically using system models) and switch to the safety controller when this reachable set violates safety. We compare our approach against this strategy, which we call  $\reachrta$  in Section~\ref{sec:exp}. Instead of computing this switching logic statically, some studies~\cite{musau2022icaa,SibaiOnlinePedest20} perform the reachability online. 
In Section~\ref{sec:basicRS}, we will see that this strategy is not safe; it can be made safe by switching at the {\em recoverable set\/} instead of the unsafe set. The recoverable set is typically computed by performing backward reachability from the unsafe states, which requires access to the model of the plant. 
In Mehmood et al.~\cite{MehmoodSBSS22},  the computation of the recoverable set is replaced, in a sense, with a precomputed permanent sequence of safe moves for the safety controller.


An alternative to designing a switching logic is to design a {\em filter\/} that blends the outputs of the  $\SC$ and the $\UC$ to create the final control output. The intuition is to move the $\UC$ output in the direction of $\SC$ when the former may violate safety. 
Active set invariance filtering (ASIF)~\cite{ames2019control,gurriet2018online} is an instance of this approach, and it uses control barrier functions (CBF), which can be constructed using the safety controller and the plant models.  These methods have been employed for satellites performing autonomous operations \cite{hibbard2022guaranteeing,mote2021natural,dunlap2022run}.
When the unsafe sets are not known a priori, constructing CBFs can be challenging. Further, ASIF aims to be minimally invasive but does not accommodate additional performance criteria that are optimized


\paragraph{RL to Verify Temporal Logic Specifications.}
There has been extensive work on using RL to solve the temporal logic verification problem for MDPs, where, given an MDP and a specification $\varphi$, the goal is to use RL to find a policy that maximizes (or minimizes) the probability of satisfying specification $\varphi$~\cite{sadigh2014learning,hasanbeig2019reinforcement,hahn2019omega,bozkurt2020control,lavaei20,alur_framework_2021,balakrishnan22}. The techniques used in this approach are similar: reward shaping, adding terminal reject states, etc. However, fundamentally, the problem being solved in these papers is different; in these papers there is only one objective being optimized, while in this paper we need to optimize rewards while satisfying a hard constraint like safety. The presence of two, possibly competing, objectives $\textendash$ maximizing reward while satisfying a hard safety requirement $\textendash$ changes the problem significantly. One examples of this is that there is always an optimal policy that maximizes the probability of satisfying a temporal property, but there is \emph{no} optimal policy that maximizes reward while satisfying a hard reachability constraint (see Section~\ref{sec:conc} for an example).

\paragraph{Safe RL.}
There is a lot of interest in developing RL-based approaches to ensure ``safety,'' which means different things in different contexts (see the survey articles for review~\cite{GarciaF15,SafeRL22}).
The approaches include modification of the optimization criteria, introduction of risk sensitivity, robust model estimation, and risk-directed policy exploration.
The closest work in this space to ours is the use of \emph{shielding} to guarantee that a safe optimal policy for an MDP is found~\cite{alshiekh2018safe,konighofer2020shield,shieldingpomdp22}. This approach synthesizes a shield that identifies ``safe'' transitions from each state. A safe transition is one that takes a system to another state from which there is some policy that can ensure safety. Optimal policy is then synthesized using RL with the shield. Synthesizing such a shield involves solving a multiplayer game which requires access to the plant model and can be computationally expensive. In contrast, our approach of reward shaping needs no expensive pre-computation. Finally, if shielding is used post-training, then the resulting policy is not guaranteed to have the optimal reward. Further, the shield needs to always be in place during deployment.

\rmv{
\paragraph{RL and Reward Shaping}
There is a vast literature on  synthesizing optimal policies for MDPs to assure that executions meet certain logical task specifications. See~\cite{BaierAFK18} for a survey on model-based approaches.  RL has emerged as a powerful  alternative  model-free approach where the  task is specified by a local reward on actions. 
PAC guarantees have been developed  for certain classes of problems, such as discounted sum rewards~\cite{PAC-RL06}. 
According to~\cite{alur_framework_2021}, there are no known RL algorithms with PAC guarantees to synthesize a policy to maximize the satisfaction of a temporal logic specification.
%
In~\cite{bozkurt2020control,hahn2019omega,sadigh2014learning,hasanbeig2019reinforcement}
the authors study the problem of learning a policy for a given MDP $\M$ that maximizes  the probability
of satisfying a given requirement $\phi$ specified as an LTL formula,  without learning the transition probabilities. 
The approaches are based on converting the LTL requirement $\phi$ to an Rabin automaton~\cite{sadigh2014learning} or a limit deterministic B\"uchi automaton~\cite{bozkurt2020control,hahn2019omega}, and then changing the rewards and the discounts in $\M$, so that a policy that maximizes the satisfaction probability of a related requirement $\phi'$ in $\M'$ induces another policy that maximizes satisfaction policy of $\phi$ in $\M$. 
\sayan{Need to spell out how this is diff from ours.}
~\cite{alur_framework_2021} defines a notion of {\em sampling-based reduction\/} that formalizes the preservation of optimal policies, convergence, and robustness.
\sayan{We could drop this.}
}



\section{The RTA Problem}
\label{sec:reward}

In a Runtime Assurance (RTA) framework, the central goal is to design a switching policy that chooses between a safe and an untrusted controller at each step, ensuring that the system remains safe while some objective (potentially the use of the untrusted controller) is maximized. To model this setup formally and to define the computational problem, it is convenient to model the state space of the control system/plant whose safety the RTA is trying to ensure.

\paragraph{Plants.}
A \emph{\plant} is tuple $\M = \tuple{Q,q_0,A,\D}$ where $Q$ is the (finite) set of states, $q_0 \in Q$ is the initial state, $A$ is a finite set of actions, and $\D: Q \times A \to Q$ is the transition function. In this paper, we will assume that the {\plant} has a large, but finitely many states and its transitions are \emph{discrete}. Most control systems can be reasonably approximated in such a manner by quantizing the state space and time. The transition function identifies the state of the system after following the control law defined by an action for a discrete duration of time. Notice, we are assuming that every action in $A$ is \emph{enabled} from each state; this is for modeling convenience and is not a restriction. Finally, in the context of RTA, the action set $A$ has typically only two elements, namely, using the safe $\SC$ or the untrusted controller $\UC$. 

\paragraph{Switching Policies.}
A \emph{switching policy} $\pi$ determines the action to be taken at each step. In general, a switching policy's choice depends on the computation history until that step (i.e., the sequence of states visited and actions taken), and the result of the roll of a dice. Thus, mathematically, it can be defined as follows. For any finite set $B$, $\prob(B)$ is the set of probability distributions over $B$. A \emph{run/finite run} of {\plant} $\M  = \tuple{Q,q_0,A,\D}$ is an alternating infinite/finite sequence of states and actions $\tau = p_0,a_0,p_1,a_1,\ldots p_n, \ldots$, where $p_i \in Q$, $a_i \in A$, $p_0 = q_0$, and $\D(p_i,a_i) = p_{i+1}$; a finite run is assumed to end in a state. The set of all runs (finite runs) of $\M$ will be denoted as $\runs(\M)$ ($\finruns(\M)$). A switching policy for {\plant} $\M$ is a function $\pi: \finruns(\M) \to \prob(A)$. The set of switching policies of a {\plant} $\M$ will be denoted by $\policy(\M)$. We will often be interested in a special class of switching policies called \emph{stationary} switching policies. A stationary policy is one where the choice of the next action is \emph{memoryless}, i.e., depends only on the last state of the run and not the entire run, and is \emph{deterministic}, i.e., exactly one action has non-zero probability. Such stationary policies can be represented as a function $\pi: Q \to A$. The set of all stationary policies of $\M$ will be denoted as $\stpolicy(\M)$.

\paragraph{Probability of Runs.}
A switching policy $\pi$ assigns a probability measure to runs in a standard manner. For a finite run $\tau = p_0,a_0,p_1,\ldots p_n \in \finruns(\M)$, \emph{the cylinder set} $\cyl{\tau}$ is the set of all runs $\rho$ that have $\tau$ as a prefix. The probability of $\cyl{\tau}$ under $\pi$ is given by 
\[
P_\pi(\cyl{\tau}) = \prod_{i=0}^{n-1} \pi(\prefix{\tau})(a_i),
\]
where $\prefix{\tau}$ is the $i$ length prefix $q_0,a_0,q_1,\ldots q_i$. The probability measure $P_\pi$ extends to a unique probability on the $\sigma$-field generated by the cylinder sets $\setpred{\cyl{\tau}}{\tau \in \finruns(\M)}$. If $\pi$ is stationary then $P_\pi(\rho) \neq 0$ for exactly one run $\rho$ and in that case $P_\pi(\rho) = 1$.

\paragraph{Safe Policies.}
The principal goal of a switching policy is to ensure the safety of the system. We will assume safety is modeled through an \emph{unsafe set} $\B$.  This unsafe set might represent different types of safety constraints, such as avoiding a region the system must not enter, such as in geofencing scenarios, or ensuring the system never depletes its fuel, and need not be limited to avoiding collisions.  Given a {\plant} $\M = \tuple{Q,q_0,A,\D}$ and an unsafe set $\B$, a run $\rho = p_0,a_0,p_1,a_1,\ldots$ is said to be \emph{unsafe} if there is an $i$ such that $p_i \in \B$. The set of unsafe runs of $\M$ with respect to $\B$ will be denoted as $\unsafe(\M,\B)$. The set $\unsafe(\M,\B)$ is measurable. A policy $\pi$ will be called \emph{safe} for $\M$ with respect to $\B$ if $P_\pi(\unsafe(\M,\B)) = 0$. The set of safe policies will be denoted as $\sfpolicy(\M,\B)$.

\paragraph{Rewards.}
A switching policy for an RTA is often designed to meet certain goals that include maximizing a reward or minimizing a cost. For example, the goal of an RTA could be to maximize the use of the untrusted controller, or maximize the time spent by the system in a target region. It may also be to use the safe and untrusted controllers in such a way as to minimize costs like fuel consumption. These can be formally captured in our setup through reward structures. A \emph{reward structure} on a {\plant} $\M = \tuple{Q,q_0,A,\D}$ is a pair $(r,\gamma)$, where $r: Q \times A \to \reals$ is the \emph{reward function}, and $\gamma \in (0,1)$ (open interval between $0$ and $1$) is the \emph{discount factor}. A reward structure assigns a reward to every run of $\M$ as follows. For $\rho = p_0,a_0,p_1,a_1,\ldots \in \runs(\M)$
\[
r(\rho) = \sum_{i} \gamma^i r(p_i,a_i).
\]
Because of the discount factor, the above infinite sum converges. The \emph{reward of policy} $\pi$ is the expected reward of runs based on the probability distribution $P_\pi$, i.e., $r(\pi) = E_{\rho \sim P_\pi}[r(\rho)]$. The calculation of a reward becomes simple in the case of a stationary policy; since exactly one run $\rho$ has non-zero probability under $\pi$, $r(\pi) = r(\rho)$, where $P_\pi(\rho) = 1$.

\paragraph{Optimal Policies.}
Let us fix a {\plant} $\M = \tuple{Q,q_0,A,\D}$, an unsafe set $\B$, and a reward structure $(r,\gamma)$. The goal in RTA is to find policies that maximize the reward. Let us define this precisely. The \emph{value} of a {\plant} $\M$ with respect to $(r,\gamma)$ is the supremum reward that can be achieved through a policy, i.e., 
$$
V(\M,r,\gamma) = \sup_{\pi \in \policy(\M)} r(\pi).
$$ 
A switching policy $\pi$ is \emph{optimum} for $\M$ and $(r,\gamma)$ if $r(\pi) = V(\M,r,\gamma)$. In general, optimum policies may not exist as the supremum may not be achieved by a policy. In RTA, we will often be interested in restricting our attention to safe policies. The \emph{value} of $\M$ with respect to $(r,\gamma)$ and $\B$ is given by 
\[
V(\M,r,\gamma,\B) = \sup_{\pi \in \sfpolicy(\M,\B)} r(\pi).
\]
As always, the supremum over an empty set is $-\infty$; thus, when $\sfpolicy(\M,\B) = \emptyset$, $V(\M,r,\gamma,\B) = -\infty$. Finally an optimal safe policy is $\pi \in \sfpolicy(\M,\B)$ such that $r(\pi) = V(\M,r,\gamma,\B)$. Again an optimal safe policy may not exist.

\paragraph{The RTA Problem.}
The goal of RTA is to find an optimal safe switching policy. Thus as a computational problem it can be phrased as follows. Given a {\plant} $\M$, a reward structure $(r,\gamma)$, and an unsafe set $\B$, find an optimal safe switching policy (if one exists).


\rmv{

\subsection{Preliminaries}
\label{sec:RLprelims}
For any finite set $S$, $\mathbb{P}(S)$, denotes the set of probability distributions over $S$. 
For any distribution $\mu \in \mathbb{P}(S)$, the support, denoted by $\supp{\mu} := \{ q \ | \ \mu(q) > 0\}.$

\begin{definition}
\label{def:aut}	
An {\em automaton \/} $\M = \langle Q, Q_0, A, \D, \rangle$ is specified by 
a finite state space $Q$, 
an initial set  $Q_0 \subseteq Q$, a finite set of actions $A$. \sayan{$A = \{\SC,\UC\}$?} 
	A transition function $\D: Q \times A \rightarrow 2^Q$. 
\end{definition}
\sayan{Reach and Unsafe sets.}

An {\em infinite execution\/} of  $\M$ is an alternating sequence of states and actions $\alpha = q_0 a_0 q_1 a_1 \ldots$, such that $q_0 \in Q_0$ and $q_{i+1} \in \D(q_i, a_i)$. 
A finite execution is a finite alternating sequence that end in a state. The last state $q_k$ of a finite execution $\alpha = q_0 a_0 q_1 a_1 \ldots q_k$ is denoted by $\alpha.\lstate$.  
A  {\em policy\/}  is a function that chooses an action in $A$ given a finite execution. A memoryless policy $\pi: Q \rightarrow A$  chooses an action based only on the current state. 
An automaton $\M$ combined with a policy $\pi$ defines (finite and infinite) executions such that for each action  $a_{i} = \pi(q_0a_0\ldots q_{i})$. For memoryless policies, this means $a_i = \pi(q_i)$. If  $\M$ is {\em deterministic\/}, that is $|Q_0| = 1$ and for each $q\in Q, a \in A$, $|\D(q,a)| =1$, then combining it with any policy $\pi$ will generate a single execution.
  

\begin{definition}[RTA problem]
	\label{def:rta} 
	Given an automaton $\M$ and an unsafe set $\U \subseteq Q$, the RTA problem is to find a policy $\pi$ such that none of the executions of $\M$ combined with $\pi$ reach $\U$. 
\end{definition}

\subsection{Memoryless policies}
\label{sec:memoryless}
A memoryless policy $\pi:Q \rightarrow A$  essentially classifies states into actions. The space of memoryless policies is isomorphic to $|A|^{|Q|}$. For $|A| = 2$,  any memoryless policy can be specified by a subset of states.

\section{\sayan{Old version}}

A {\em Markov decision process (MDP)\/} $\M = \langle Q, Q_0, A, \D, r, 
\gamma\rangle$ is specified by 
\begin{itemize}
	\item A finite state space $Q$.
	\item An initial state  $q_0 \in Q$.
	\item A finite action space $A$. In this paper $A = \{\SC,\UC\}$ the actions model the choice of the safety and the untrusted controllers.
	\item A transition function $\D: Q \times A \rightarrow \mathbb{P}(Q)$. 
	\item A reward function $r:Q \times A \rightarrow [0,1]$.
	\item A discount factor $\gamma \in [0,1)$.
\end{itemize}
We say that an MDP is {\em action-deterministic\/} (or just deterministic in brief) if 
for any state $q \in Q$ and action $a \in A$, $|\supp{\D(q,a)}| =1.$ 
For a deterministic MDP, the only choice or uncertainty arises from the choice of the actions. 

\begin{remark}
State space may be massive.
The transition  function is not  known in any analytical form. 
\end{remark}

An {\em execution\/} of an MDP $\M$ is a finite or infinite alternating sequence of states and actions in. The last state of a finite execution $\alpha$ is denoted by $\alpha.\lstate$.  

A  (stationary) policy  is a function $\pi: Q \rightarrow \mathbb{P}(A)$ that chooses an action based on the current state. 
A policy is {\em deterministic\/} if $|\supp{\pi(q)}| =1$ for all states $q\in Q$. 

Given a finite execution $\alpha = q_0 a_0 \ldots  a_{t-1} q_t$, the {\em cylinder set\/} of $\alpha$ is the set of all infinite extensions of $\alpha$ and it denoted by $Cyl(\alpha)$. Given an MDP $\M$ and a policy $\pi$, we define the probability of the cylinder set as: $\P_\M^\pi(Cyl(\alpha)) = \Pi_{i=0}^t \pi(q_i)(a_i)\D(q_i,a_i)(q_{i+1}).$ 
This probability distribution $\P_\M^\pi$  can be uniquely extended to a distribution over the $\sigma$-algebra generated by the cylinder sets. 
A policy $\pi$ is {\em unsafe\/} with respect to a given set 
$\unsafe \subseteq Q$, if there is a finite execution $\alpha$ with the final state of $\alpha.\lstate \in \unsafe$ and $\P_\M^\pi(Cyl(\alpha)) > 0$. 
That is, if $\pi$ generates executions of $\M$ that hit $\unsafe$ in finite number of steps with some positive probability, then $\pi$ is unsafe. A policy that is  not unsafe is called {\em safe\/}.

\sayan{The following definitions have to be extended to general MDPs, if we are to extend the theory to stochastic models and policies.}

The {\em value function} for a (not necessarily stationary) policy $\pi$ is defined as: 
\begin{align}
	V^\pi(q) := \sum_{t=0}^\infty \gamma^tr(q_t,a_t), \label{eq:val-def}
\end{align}
where $q_0 = q$, $\pi(q_t) = a_t$, $q_{t+1} = \D(q_t, a_t)$. Similarly, the action-value (or Q-value) function is defined as:
\begin{align}
	\mathcal{Q}^\pi(q,a) := \sum_{t=0}^\infty \gamma^tr(q_t,a_t), \label{eq:q-def}
\end{align}
where $q_0 = q$, $a_0 = a$, and for $t >0$, $\pi(q_t) = a_t$, $q_{t+1} = \D(q_t, a_t)$. 
%
The optimal value function is defined as $V^*(q) = \sup_{\SL} V^\pi(q)$. A policy $\pi^*$ is optimal if $V^{\pi^*} =  V^*$.
We can also define the  optimal $\mathcal{Q}$-function 
$$
\mathcal{Q}^*(q,a) = \sup_\SL \mathcal{Q}^\pi(q,a). 
$$
Again, here the supremum is over all nonstationary and randomized policies. 
Bellman consistency equations state that for any stationary policy $\pi$, and for all $q \in Q, a \in A$,
\begin{align}
	V^\pi(q) &= \mathcal{Q}^\pi(q, \pi(q)) \label{eq:val-consistency} \\
	\mathcal{Q}^\pi(q,a) &= r(q,a) + \gamma V^\pi(\D(q,\pi(q))). \label{eq:q-consistency}
\end{align}
Bellman's optimality theorem (Theorem 1.7) states thate there exists a stationary and deterministic policy $\pi$ such that for all $q \in Q, a \in A$, $V^\pi(q) = V^*(q)$ and $\mathcal{Q}^\pi(q,a) = \mathcal{Q}^*(q,a)$. Such a $\pi$ is called an optimal policy.  

A $\mathcal{Q}$-function satisfies the Bellman optimality equation if
\begin{align}
	\mathcal{Q}(q,a) = r(q,a) + \gamma \max_{a'\in A} \mathcal{Q}(\D(q,a), a').
\end{align}
For any $\mathcal{Q}$-function $\mathcal{Q}$ satisfying the above equation, $\mathcal{Q} = \mathcal{Q}^*$. Furthermore, the deterministic policy defined as $\pi(q) \in argmax_{a \in A} \mathcal{Q}^*(q,a)$ is an optimal policy (where ties are broken arbitrarily).

\begin{definition}[RTA problem]
\em
For a given MDP $\M$ and a set  $\mathit{Unsafe} \subseteq Q$,
find a {\em safe} policy $\pi$ that maximizes value $V^\pi(q_0)$. 
\end{definition} 
}


\section{Synthesizing Safe Policies}

In this section we discuss how we might synthesize safe switching policies. We begin by examining standard approaches used in RTA design (Section~\ref{sec:basicRS}) and we identify through examples where they may fall short. We then present our approach to the problem in Section~\ref{sec:shaping}. Typical approaches to synthesizing optimal control policies like static analysis methods for MDPs or reinforcement learning, synthesize a switching policy that maximizes reward but do not guarantee its safety. Our approach is to change the reward structure so that an optimal policy for the {\plant} in the new reward structure, is an optimal, safe policy in the old reward structure, provided the value of the plant in the new reward structure is non-negative. Our reward shaping approach allows one to use off-the-shelf techniques to synthesize an optimal, safe switching policy for an RTA. Our experiments later rely on using reinforcement learning, which can be applied seamlessly to large unknown state spaces. Our formal model of a {\plant} is deterministic --- transitions associated with a state and action, result in a unique next state. However, the reward shaping approach generalizes to other transition structures as well. These are observed at the end of Section~\ref{sec:shaping}.

\subsection{Reachable and Recoverable States}
\label{sec:basicRS}

Let us fix a {\plant} $\M = \tuple{Q,q_0,A,\D}$ and an unsafe set $\B$. In this section, we will assume that the action set $A$ consists of two actions $\SC$ and $\UC$, corresponding to choosing the safe controller and untrusted controller, respectively. Let also assume that our goal is to maximize the use of the untrusted controller as much as possible. This could be captured by a reward structure where every transition of action $\UC$ gets some positive reward, while all other transitions have reward $0$.  

One standard approach to RTA design is to use the \emph{safe set lookahead policy} which is a stationary policy that uses the following decision logic: Choose the untrusted controller as long as the safety obligations are not violated. This policy can be formulated (and generalized) as follows. Let $P \subseteq Q$ be a set of {\plant} states. A \emph{$P$-lookahead policy} is a (stationary) switching policy $\pi_P$ defined as $\pi_P(q) = \UC$ if and only if $\D(q,\UC) \in P$. A safe lookahead policy is then simply a $(Q \setminus \B)$-lookahead policy. Unfortunately, such a policy does not guarantee safety.

\begin{example}
\label{ex:lookahead}
\begin{figure}
    \vspace{-0.75cm}
    \centering
    \includegraphics[width=0.35\textwidth]{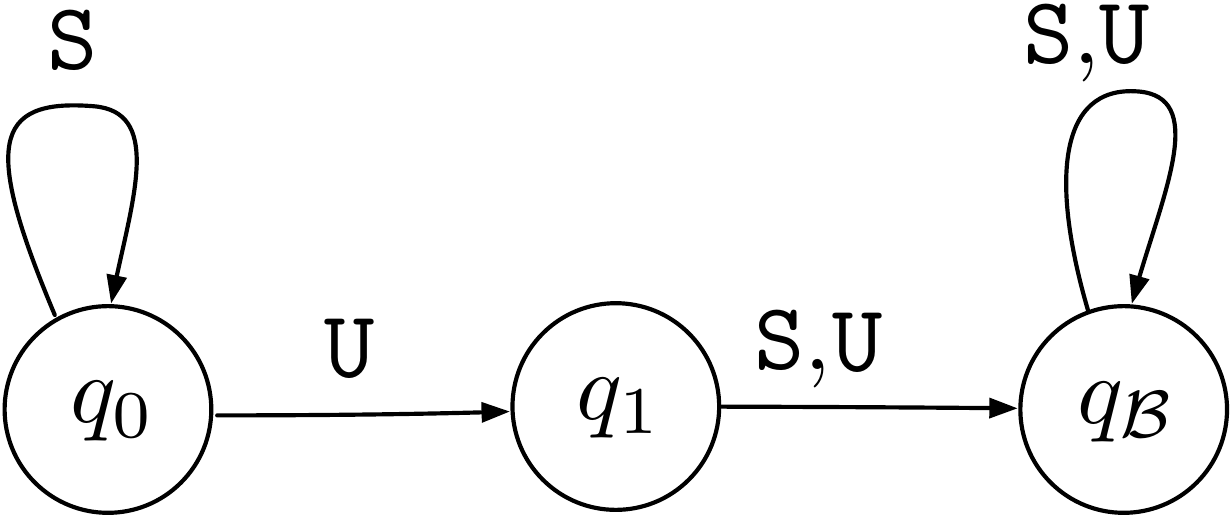} \hspace*{0.5in}
    \includegraphics[width=0.35\textwidth]{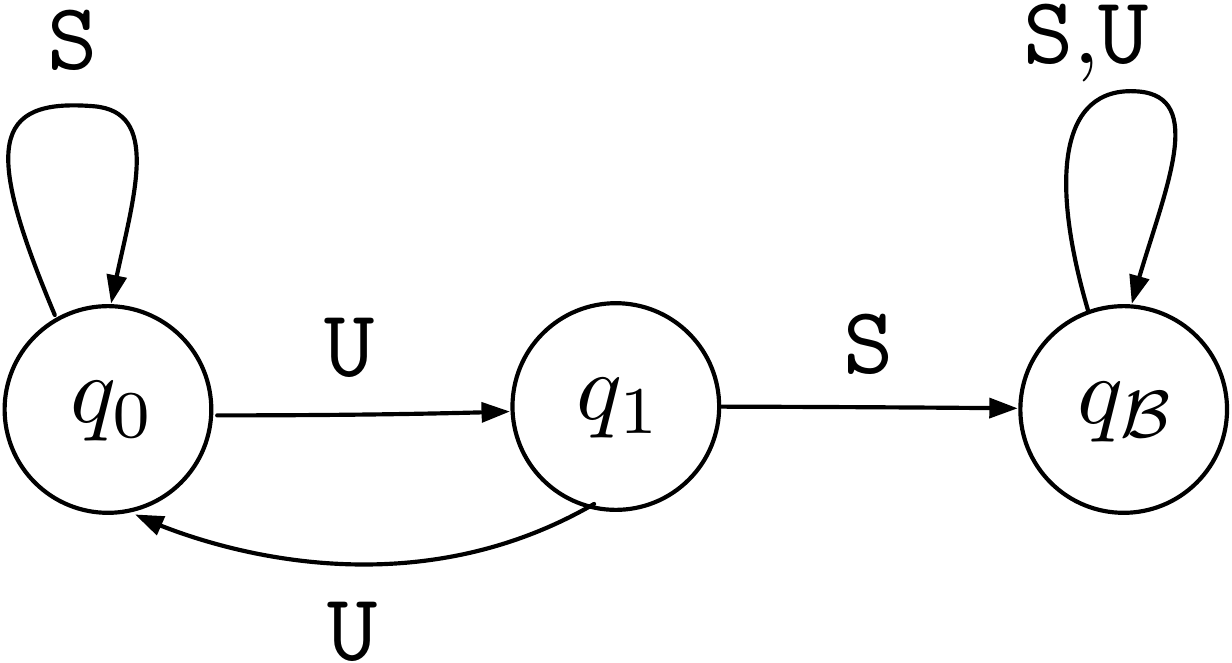}  
    \caption{\small {\em Left.} Example showing that the safe lookahead policy can be unsafe.
      {\em Right.} Example showing the sub-optimality $R$-lookahead policies for a recoverable set $R$.
    }
    \label{fig:lookahead}
\end{figure}

Consider the {\plant} shown in Fig.~\ref{fig:lookahead} (left). It has 3 states with $q_0$ being the initial state, and $\set{q_\B}$ being the unsafe set of states. The safe controller $\SC$ keeps the {\plant} in state $q_0$ if chosen in the initial state, and otherwise takes the system to the unsafe state $q_\B$ from all other states. On the other hand, the untrusted controller $\UC$ takes the system to state $q_1$ from $q_0$, and like $\SC$, takes the system to $q_\B$ from all other states. 

A safe lookahead policy would recommend using $\UC$ from $q_0$ since choosing $\UC$ from $q_0$ does not violate safety. But this will doom the {\plant} to violate safety in the very next step. A longer lookahead (instead of one step) when deciding whether to choose the untrusted controller is not a solution either --- the {\plant} in Fig.~\ref{fig:lookahead} (left) can be modified by lengthening the path from $q_1$ to $q_\B$ that delays the impending doom but does not avoid it.
\end{example}

A $P$-lookahead policy can be adapted to ensure safety by choosing an appropriate set for $P$. A set $R \subseteq Q$ will be said to be a \emph{recoverable set} for $\SC$ and unsafe set $\B$ if the following conditions hold: (a) $R \cap \B = \emptyset$, and (b) for all $q \in R$, $\D(q,\SC) \in R$. In other words, a recoverable set is a safe inductive invariant for the safe controller $\SC$ starting from a state in the recoverable set. It is easy to observe that the arbitrary union of recoverable sets is also recoverable, and so there is a unique largest (w.r.t. the subset relation) recoverable set. It is easy to show that $P$-lookahead policy is safe if $P$ is a recoverable set containing the initial state.

\begin{proposition}
\label{prop:recoverableset}
Let $\M = \tuple{Q,q_0,A,\D}$. If $R \subseteq Q$ is such that $q_0 \in R$ and $R$ is a recoverable set for $\M$ with respect to unsafe set $\B$, then the $R$-lookahead policy $\pi_{R}$ is safe.
\end{proposition}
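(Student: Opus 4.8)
The plan is to exploit the fact that $\M$ is deterministic and $\pi_R$ is stationary, so (as observed in the paragraph on probability of runs) there is exactly one run $\rho = p_0,a_0,p_1,a_1,\ldots$ with $P_{\pi_R}(\rho) = 1$, and $\pi_R$ is safe with respect to $\B$ precisely when this single run never visits a state in $\B$. It therefore suffices to prove that every state $p_i$ occurring on $\rho$ belongs to $R$: since $R \cap \B = \emptyset$ by condition (a) of recoverability, this immediately gives $p_i \notin \B$ for all $i$, hence $\rho \notin \unsafe(\M,\B)$ and $P_{\pi_R}(\unsafe(\M,\B)) = 0$, which is the definition of $\pi_R$ being safe.

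I would establish $p_i \in R$ for all $i$ by induction on $i$. For the base case, $p_0 = q_0 \in R$ by hypothesis. For the inductive step, assume $p_i \in R$; the goal is $p_{i+1} = \D(p_i,a_i) \in R$, where $a_i = \pi_R(p_i)$. Split on the value of $\pi_R(p_i)$. If $a_i = \UC$, then by the definition of the $R$-lookahead policy, $\pi_R(p_i) = \UC$ holds only when $\D(p_i,\UC) \in R$, so $p_{i+1} = \D(p_i,\UC) \in R$. If $a_i = \SC$, then $p_{i+1} = \D(p_i,\SC)$, and since $p_i \in R$, condition (b) of recoverability ($\D(q,\SC) \in R$ for all $q \in R$) gives $p_{i+1} \in R$. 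In both cases $p_{i+1} \in R$, completing the induction.

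There is no substantial obstacle here: the only point needing care is the reduction in the first paragraph from the measure-theoretic definition of safety to a statement about the unique realized run, which relies on $\D$ being deterministic and $\pi_R$ being stationary; once that reduction is in place the remainder is a one-line invariant-preservation induction, using exactly the two defining clauses of a recoverable set together with $q_0 \in R$. Phrased cleanly, the core fact is that $R$ is an invariant of the closed loop $\M + \pi_R$ --- the $\SC$-transitions stay inside $R$ by recoverability, and the $\UC$-transitions stay inside $R$ because of the guard $\D(q,\UC) \in R$ built into the definition of $\pi_R$.
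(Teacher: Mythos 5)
Your proof is correct and follows essentially the same route as the paper's: a single induction showing that every state on the unique run of $\pi_R$ stays in $R$, splitting the inductive step on whether the action is $\UC$ (using the lookahead guard) or $\SC$ (using recoverability), and then concluding safety from $R \cap \B = \emptyset$. If anything, you are slightly more explicit than the paper about the reduction from the measure-theoretic definition of safety to the single realized run and about where $R \cap \B = \emptyset$ is used.
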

\begin{proof}
Fix $R \subseteq Q$ and let $\rho = p_0, a_0, p_1, a_1, \ldots$ be the unique run corresponding to $\pi_R$. We should by induction that $p_i \in R$ for all $i$. For the base case, observe that $p_0 = q_0 \in R$ by assumption. For the inductive step, assume that $p_i \in R$. If $a_i = \UC$ then we know $\pi_R(p_i) = \UC$ and that means $p_{i+1} = \D(p_i,\UC) \in R$. On the other hand, if $a_i = \SC$, then by definition of $R$, $p_{i+1} = \D(p_i,\SC) \in R$. Thus, since $\rho \not\in \unsafe(\M,\B)$, $\pi_R$ is a safe policy.
\end{proof}

However $R$-lookahead policies may not be optimal as shown below.

\begin{example}
\label{ex:recoverable-suboptimal}
Consider the example {\plant} shown in Fig.~\ref{fig:lookahead} (right). There are 3 states with $q_0$ being the initial state, and the unsafe set $\B = \set{q_\B}$. It is similar to the {\plant} in Fig.~\ref{fig:lookahead} (left) and Example~\ref{ex:lookahead} with the only difference being that the action $\UC$ from $q_1$ takes the {\plant} to state $q_0$. Let us consider a reward structure $(r,\gamma)$ that rewards the use of controller $\UC$. That is, $r(q,a) = 1$ if and only if $a = \UC$ and is $0$ otherwise. 

The only (non-empty) recoverable set for this example is $R = \set{q_0}$, since $\D(q_1,\SC) \in \B$. The $R$-lookahead policy $\pi_R$ chooses $\SC$ at each step, since choosing $\UC$ at state $q_0$ takes the system out of set $R$. While $\pi_R$ is safe, it has reward $0$. On the other hand, the switching policy that chooses $\UC$ at every step is safe and has reward $1/(1-\gamma) > 0$. Therefore, $\pi_R$ is not optimal.
\end{example}

\subsection{Reward shaping}
\label{sec:shaping}

As observed in Examples~\ref{ex:lookahead} and~\ref{ex:recoverable-suboptimal}, popular approaches to designing a RTA switching policy may fail to meet either the safety or the optimality conditions that one may require. The formal model for a {\plant} we have is a special case of a Markov Decision Process (MDP)~\cite{Puterman:1994}, and so methods that identify an optimal policy for MDPs could be used here, including reinforcement learning. However, these approaches while effective in constructing an optimal policy, do not guarantee picking an optimal \emph{safe} policy. New proposals for adapting reinforcement learning to guarantee safety include approaches involving safety shields~\cite{alshiekh2018safe} (see Section~\ref{sec:related} for a comparison). Our method outlined here ``shapes the reward'' or changes the reward structure so that even using vanilla reinforcement learning that ignores the safety constraints will identify an optimal safe policy.

Consider a {\plant} $\M = \tuple{Q,q_0,A,\D}$, and a reward structure $(r,\gamma)$. For a subset $P \subseteq Q$ and $p \in \reals$, the reward function $r[P \mapsto p]$ is defined as follows.
\[
r[P \mapsto p](q,a) = \left\{ \begin{array}{ll}
                                r(q,a) & \mbox{if } q \not\in P\\
                                p & \mbox{if } q \in P
                              \end{array} \right.
\]
The main observation of this section is a result describing how to change the reward structure to compute the optimal safe policy.
\begin{theorem}
\label{thm:reward-RTA}
Let $\M = \tuple{Q,q_0,A,\D}$ be a {\plant} with $|Q| = n$. Let $(r,\gamma)$ be a reward structure for $\M$ such that $r(q,a) \geq 0$ for all $(q,a) \in Q \times A$ and let $r_{\max} = \max_{(q,a) \in Q \times A} r(q,a)$ be the maximum reward on any transition. Let $\B \subseteq Q$ be the set of unsafe states. Define $p = -\frac{r_{\max}}{\gamma^n(1-\gamma)}$ and $r' = r[\B \mapsto p]$. The following statements are true.
\begin{enumerate}[(a)]
\item There is a stationary policy $\pi$ that is optimal for $\M$ with respect to $(r',\gamma)$, i.e., $r'(\pi) = V(\M,r',\gamma)$.
\item $\sfpolicy(\M,\B) \neq \emptyset$ if and only if $V(\M,r',\gamma) \geq 0$.
\item If $V(\M,r',\gamma) \geq 0$ and $\pi$ is an optimal stationary policy for $\M$ with respect to $(r',\gamma)$ then $\pi$ is an optimal safe policy for $\M$ with respect to $(r,\gamma)$ and unsafe set $\B$.
\end{enumerate}
\end{theorem}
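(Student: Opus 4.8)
The three claims hinge on a single quantitative fact: with the penalty $p = -\frac{r_{\max}}{\gamma^n(1-\gamma)}$, any run that ever enters $\B$ has strictly negative reward, while any safe run has reward in $[0, \frac{r_{\max}}{1-\gamma}]$. So the first thing I would do is establish this dichotomy precisely. For a run $\rho$ that first hits $\B$ at step $k \le n-1$ (if a run hits $\B$ at all, it does so within $n$ steps since $|Q| = n$ and the run is deterministic — actually one must be a little careful here, but a run that eventually enters $\B$ must enter it within $n$ steps because the prefix before the first visit to $\B$ visits at most $n$ distinct states, so some state repeats and the loop can be shortcut... in fact more simply: the \emph{first} visit to $\B$ happens at some step, and if it happens then it happens at a step $\le n-1$ after collapsing the cycle-free prefix — let me just assert the clean version), the discounted reward picks up the term $\gamma^k p$, and since all other rewards are nonnegative and bounded, $r'(\rho) \le \frac{r_{\max}}{1-\gamma} + \gamma^k p \le \frac{r_{\max}}{1-\gamma} + \gamma^{n-1} p$. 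Plugging in $p$ and checking the arithmetic shows this is $< 0$. Conversely, a safe run never incurs the penalty, so its $r'$-reward equals its $r$-reward, which lies in $[0, \frac{r_{\max}}{1-\gamma}]$.

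\emph{Claim (a)} follows because a {\plant} with reward structure $(r',\gamma)$ is just a finite MDP (finite state, finite action, deterministic transitions, bounded rewards, discount $\gamma \in (0,1)$), and classical MDP theory (Bellman optimality / Puterman) guarantees a stationary deterministic optimal policy attaining the value $V(\M,r',\gamma)$. I would simply cite this.

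\emph{Claim (b)}: If $\sfpolicy(\M,\B) \ne \emptyset$, take any safe $\pi$; by the dichotomy its $r'$-reward is $\ge 0$ (it is an expectation over runs that are almost surely safe, each contributing nonnegative $r'$-reward), so $V(\M,r',\gamma) \ge r'(\pi) \ge 0$. Conversely, suppose $V(\M,r',\gamma) \ge 0$. By (a) there is a stationary optimal $\pi^*$ with $r'(\pi^*) = V(\M,r',\gamma) \ge 0$; since $\pi^*$ is stationary its unique run $\rho^*$ has $r'(\rho^*) = r'(\pi^*) \ge 0$, and by the dichotomy (a run hitting $\B$ has strictly negative $r'$-reward) $\rho^*$ must be safe, so $\pi^* \in \sfpolicy(\M,\B)$, witnessing nonemptiness. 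This same argument does most of the work for (c).

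\emph{Claim (c)}: Given $V(\M,r',\gamma) \ge 0$ and $\pi$ stationary optimal for $(r',\gamma)$, the argument just above shows $\pi$ is safe, and moreover on $\pi$'s unique run $r'(\pi) = r(\pi)$ (no penalty incurred). It remains to see $r(\pi) = V(\M,r,\gamma,\B)$, i.e. $\pi$ is reward-optimal among \emph{all} safe policies. Let $\sigma$ be any safe policy. Then $r(\sigma) = r'(\sigma)$ (safe policies never touch the penalized states, so the two reward structures agree on them in expectation — here I should note the clean fact that if $P_\sigma(\unsafe(\M,\B)) = 0$ then $r(\sigma) = r'(\sigma)$, since $r$ and $r'$ differ only on $\B$ which is reached with probability zero). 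Now $r'(\sigma) \le V(\M,r',\gamma) = r'(\pi) = r(\pi)$. Hence $r(\sigma) \le r(\pi)$ for every safe $\sigma$, so $r(\pi) = \sup_{\sigma \in \sfpolicy(\M,\B)} r(\sigma) = V(\M,r,\gamma,\B)$, and since $\pi$ itself is safe it is an optimal safe policy.

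\textbf{Main obstacle.} The only genuinely delicate point is the "within $n$ steps" claim needed to make the penalty large enough: I need that \emph{if} a deterministic run reaches $\B$ at all, it reaches $\B$ at some step $k \le n-1$ — or rather, that the supremum over safe-policy rewards cannot be beaten by an unsafe policy, which really requires comparing $V(\M,r',\gamma)$ against $r'$ of unsafe runs, and unsafe stationary runs do hit $\B$ within $n$ steps (the prefix before the first $\B$-visit is a simple path of length $\le n-1$, else a state repeats before $\B$ and we never reach $\B$). For non-stationary/randomized policies the bound $r'(\sigma) < 0$ whenever $P_\sigma(\unsafe) > 0$ needs the conditional-expectation version: conditioned on the first hitting time of $\B$ being $k$, that branch contributes $\le \frac{r_{\max}}{1-\gamma} + \gamma^k p$, and $k$ can be arbitrarily large, making $\gamma^k p \to 0^-$ — so this naive bound does \emph{not} give strict negativity for large $k$! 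This is the subtlety to get right: I would instead argue that if $P_\sigma(\unsafe) > 0$ then there is a finite prefix hitting $\B$ at step $k \le n-1$ with positive probability (shortcut cycles in the prefix, using determinism of $\D$ and finiteness of $Q$), modify $\sigma$ on that prefix, or — cleaner — compare only against the \emph{stationary} optimum from (a) and use that any safe policy's reward is $\le V(\M,r',\gamma)$ which is achieved by a safe stationary policy. The honest fix is: show (i) $V(\M,r',\gamma) < 0$ when $\sfpolicy(\M,\B) = \emptyset$ (every run, hence every policy, hits $\B$ within $n$ steps, giving strictly negative reward via the $\gamma^{n-1}p$ bound — this case is fine), and (ii) when $\sfpolicy(\M,\B) \ne \emptyset$, $V(\M,r',\gamma) \ge 0$ and is attained by a stationary policy whose run, being finite-state and deterministic, either stays safe forever or hits $\B$ within $n$ steps; optimality forces the former. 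That routing avoids ever needing strict negativity for unbounded hitting times.
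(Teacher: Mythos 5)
Your proof is correct and takes essentially the same route as the paper's: (a) by citing the classical existence of stationary optimal policies for discounted MDPs, (b) forward via $r=r'$ on safe runs together with $r\ge 0$, and backward by observing that the unique run of a stationary optimal policy, if unsafe, first hits $\B$ at some step $k<n$ and therefore has $r'$-reward strictly below $0$, with (c) following because $r'(\sigma)\le r'(\pi)$ for every safe $\sigma$ by optimality of $\pi$ over \emph{all} policies. The subtlety you flag about unbounded hitting times for non-stationary policies is resolved exactly as in the paper — the strict-negativity bound is only ever applied to the stationary optimum, whose first visit to $\B$ (if any) occurs within $n$ steps.
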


\begin{proof}
Observe that a {\plant} $\M$ is a special Markov Decision Process (MDP). Since optimal stationary policies always exist for MDPs with discounted rewards (Theorem 6.2.7 of~\cite{Puterman:1994}), observation (a) follows immediately.

Next, let us assume that $\sfpolicy(\M,\B) \neq \emptyset$. Suppose $\rho$ is a safe run, i.e., $\rho \not\in \unsafe(\M,\B)$. Then $r(\rho) = r'(\rho)$ because for every transition $(q,a)$ in $\rho$, $r(q,a) = r'(q,a)$ as $q \not\in \B$. For any $\pi \in \sfpolicy(\M,\B)$, since $P_\pi(\unsafe(\M,\B)) = 0$, $r(\pi) = r'(\pi)$. Finally, as $r(q,a) \geq 0$ for all $(q,a) \in Q\times A$, we have $0 \leq r(\pi) = r'(\pi) \leq V(\M,r',\gamma)$. Conversely, suppose $V(\M,r',\gamma) \geq 0$. From part (a), there is a stationary policy $\pi$ such that $r'(\pi) = V(\M,r',\gamma) \geq 0$. We will show that $\pi$ is a safe policy. For contradiction, suppose $\pi$ is unsafe. Since $\pi$ is stationary, there is exactly one run $\rho = p_0,a_0,p_1,a_1,\ldots$ such that $P_\pi(\rho) > 0$. Our assumption of $\pi$ being unsafe means that $\rho$ is unsafe, there is a $k$ such that $p_k \in \B$. Further, since $\pi$ is stationary, $k < n$.
\begin{align*}
r(\pi) = r(\rho) & = \sum_{i=0}^\infty \gamma^i r(p_i,a_i) = \sum_{i=0}^{k-1} \gamma^i r(q_i,a_i) - \gamma^k\frac{r_{max}}{\gamma^n(1-\gamma)} + \sum_{i=k+1}^\infty \gamma^i r(q_i, a_i)  \\
  & \leq r_{max} \left[\sum_{i=0}^{k-1} \gamma^i  - \frac{\gamma^{k}}{\gamma^n(1-\gamma)} + \sum_{i=k+1}^\infty \gamma^i \right]  \leq r_{max} \left[\sum_{i=0}^\infty \gamma^i  - \frac{\gamma^{k}}{\gamma^n(1-\gamma)} \right] \\
  & = r_{max} \left[\frac{1}{1-\gamma}  - \frac{\gamma^{k}}{\gamma^n(1-\gamma)} \right]  = \frac{r_{max}}{1-\gamma} \left[\left(1 - \frac{\gamma^{k}}{\gamma^n}\right) \right] < 0.
\end{align*}
The last step is because $\gamma < 1$ and $k < n$. This gives us our desired contradiction, completing our proof of (b). 

From the arguments in the previous paragraph, we have if $V(\M,r',\gamma) \geq 0$ and $\pi$ is an optimal stationary policy, then $\pi \in \sfpolicy(\M,\B)$. Since $\pi$ is stationary, the unique run $\rho$ such that $P_\pi(\rho) > 0$ must also be safe. Again from observations in the previous paragraph, we have for safe runs $\rho$, $r(\rho) = r'(\rho)$ which means $r(\pi) = r'(\pi)$. We have also observed that for any policy $\pi' \in \sfpolicy(\M,\B)$, $r(\pi') = r'(\pi') \leq r'(\pi) = r(\pi)$; the observation that $r'(\pi') \leq r'(\pi)$ follows from the fact that $\pi$ is optimal for $(r',\gamma)$. Thus, $\pi$ is an optimal safe policy for $\M$ with respect to $(r,\gamma)$ and $\B$. This completes our proof of (c).
\end{proof}

This technique of reward shaping to find an optimal safe policy can be applied to more general models than the {\plant} model considered here.

\subsubsection*{Markov Decision Processes}
A Markov Decision Process (MDP) is a generalization of the plant model. An MDP $\M$ is a tuple $\tuple{Q,q_0,A,\D}$, where $Q$, $q_0$, $A$ are like before the set of (finitely many) states, initial state, and actions, respectively. The transition function $\D$ is now \emph{probabilistic}. In other words, $\D$ is a function which on a state-action pair returns a probability distribution on states, i.e., $\D: Q\times A \to \prob(Q)$. Reward structures and unsafe states are like before. The notion of switching policies, the probability space, safe policies, reward of a policy, and the value can be generalized to the setup of MDPs in a natural manner. These formal definitions are skipped; the reader is directed to~\cite{Puterman:1994}. The goal once again is to find the best safe policy. The reward used in Theorem~\ref{thm:reward-RTA} needs to modified slightly to get a result similar to Theorem~\ref{thm:reward-RTA}. For a reward $r$, let $r'$ be the reward function given by
\[
r' = r[\B \mapsto -\frac{r_{\max}}{p\gamma^n(1-\gamma)}]
\]
where $p$ is the least probability of a path of length $n$ in $\M$. In the worst case, we can estimate $p$ as $p \geq b^n$, where $b$ is the smallest non-zero probability of any transition. Theorem~\ref{thm:reward-RTA} goes through for this $r'$. The proof is almost the same. The crucial observation that enables the proof is that an MDP $\M$ with reward structure $(r',\gamma)$, has an optimal stationary strategy which is established in Theorem 6.2.7 of~\cite{Puterman:1994}.


\rmv{
\subsection{Unsafe, catastrophic, and recoverable sets}
\label{sec:unsafe:cat}
Suppose $\unsafe \subseteq Q$ is the set of unsafe states. We will find it useful to  define a subset $C \subseteq \unsafe$ that is permanently unsafe. That is, $C = \{ q \in \unsafe \ | \ \D(q,a) \in C, \forall a \in A \}$. These are the {\em crashed\/} or catastrophic states. Once the system enters $C$ there is no escape. 
\kristina{Isn't this the same as assumption 1?}

We define $H := \cup_k \D^{-k}(C,\{\SC,\UC\})$ to be the set of {\em hopeless\/} states.  That is, a state that is  not necessarily in $C$ or even in unsafe, but no choice of actions can save it from being in $C$.  

We define a state $q \notin H$ to be on the boundary of $H$ if 
$\D(q,\SC) \in H \ \underline{\vee}\ \D(q,\UC) \in H$. 
\begin{proposition}
	\label{prop:crashed}
	If $r(q,a) = 0$ for any $q \in \unsafe$, then for any $\pi, q \in C,$ $V^\pi(q) = 0$. And, for any $a \in A$, $Q(q,a) = 0$. 
\end{proposition}
\begin{proof}
	Fix $q \in C$ and a policy $\pi$.
	Consider any execution $q_0, a_0, r_0, q_1, a_1, r_1, \ldots$, with $q_0 = q \in C$. For each $t \geq 0$, $q_t \in C$, since, by the definition of $C$, $\D(q_t, a_t) \in C$ for any $a \in A$. 
	Thus, using the definition of the value function in Equation~(\ref{eq:val-def}), 
	$V^\pi(q,a) = \sum_t \gamma^t r(q_t, a_t) = 0$, since $r(q_t, a_t) = 0$ for $q_t \in C \subseteq \mathit{Unsafe}$. 
	Using Equation~(\ref{eq:q-consistency}), it follows that $Q^\pi(q, a) = r(q,a) +\gamma V^\pi(q') = 0$, since $q' = \D(q,a) \in C$.
\end{proof}

\begin{proposition}
	\label{prop:boundary}
	If $r(q,a) = 0$ for any $q \in \unsafe$, then for any $ q \in B$, $\D(\pi^*(q) \notin H$.
	\end{proposition}
}

\section{Experimental Evaluation}
\label{sec:exp}

\subsection{Scenarios for evaluating RTA systems}
\label{sec:scenarios}
In this section, we describe a sequence  of  increasingly complex RTA problems that are used for our experimental evaluations.
A scenario is described  by at least two agents with some dynamics, with the unsafe sets defined in terms of the agents' separation distance. We  focus on the RTA system deployed on one of the agents, the {\em follower}, which is responsible for maintaining safety by switching between the safety and untrusted controllers. 
Agent dynamics are described using differential equations, which is the standard setup for  control systems. Computer implementation of these models can be seen as systems with discrete transitions in a continuous state space, which satisfies our definition of a plant.
Such implementations are commonly used to approximate real-world dynamics.
\begin{table}[!ht]
    \centering
    \caption{\small Comparison of the complexity of the experimental scenarios. Simulation D.O.F: The number of degrees of freedom necessary to describe the scenario.  Control dimension: the dimension of the  inputs generated by the controllers  in the scenario. Dynamic unsafe sets: The number of unsafe regions that must be avoided.   Layer size: number of hidden units in each layer of the trained network for the $\rlrta$.  Model parameters: Total number of trainable parameters for the $\rlrta$ used in the  scenario.}
    \vspace{0.2mm}
    \begin{tabular}{|c|c|c|c|c|c|}
    \hline 
        Scenario & \makecell{Simulation \\D.O.F.} & \makecell{Control\\ Dimension} & \makecell{Dynamic \\Unsafe Sets} & \makecell{Layer\\ Size} & \makecell{Model \\Parameters} \\ \hline
        \acc & 4 & 1 & 1 & 512 & 532,995  \\ \hline 
        \dubins & 12 & 2 & 1 & 1024 & 2,141,187   \\ \hline  
        \building & 15 & 2 & 1 & 512 & 547,331 \\ \hline 
        \groundCollision & 21 & 3 & 1 & 512 & 644,611 \\ \hline  
        {\sf Fleet\/} & 30 & 8 & 5 & 512 & 645,649\\ \hline
    \end{tabular}
\end{table}

\paragraph{Adaptive Cruise Control (\acc).}
This scenario involves a leader vehicle moving at  constant speed  and a follower trying to maintain a safe separation.  Similar ACC models have been studied elsewhere~\cite{fainekos2012verification,LoosPN11,ACC-xenofon,FanQM18}.
%
%
An engineer using RTA could test an experimental ACC controller ($\UC$) while using a well-tested or legacy design as the safety controller.
In our model, both vehicles have double-integrator dynamics. 
The state of the leader ($\ell$) is given by
$q^{\lead} = [x^{\lead}, v^{\lead}]^{\top}$,
where $x^{\lead}$ is the position and $v^{\lead}$ is the velocity. The state of the follower is $q^{\ego} = [x^{\ego}, v^{\ego}]^{\top}$.
The leader moves with some constant speed and the follower can apply an acceleration $a \in [-a_{\max}, a_{\max}]$ ($a_{\max} > 0$) as  chosen by either of the controllers. 
Both the untrusted and safety controllers try  to track a point at a distance $d$ behind the leader, but they use different strategies: 
The untrusted controller is a simple bang-bang controller that applies maximum acceleration $u_\UC$ if it is behind the goal and maximum deceleration otherwise.
The safety controller applies acceleration $u_\SC$ proportional  to the positional and velocity errors:
\begin{equation*}
    u_\UC(q^{\ego}, q^{\lead})  = \begin{cases}
    -a_{\max} & x^{\ego} > x^{\lead}-d \\
    a_{\max} & x^{\ego} \leq x^{\lead}-d 
    \end{cases} \quad \text{ and } \quad
    u_\SC(q^{\ego}, q^{\lead}) = \begin{bmatrix}
    k_1 & k_2
    \end{bmatrix} \begin{bmatrix}
        (x^{\lead} - d) - x^{\ego} \\
        v^{\lead} - v^{\ego}
    \end{bmatrix},
\end{equation*}
where $k_1$ and $k_2$ are some positive gains.
A safety violation occurs if the follower is within a distance $c < d$ of the leader; that is, the unsafe set $\B = \{q^{\ego} \ | \  x^{\ego} \in Ball_c(x^\lead) \}$, where $Ball_c(x^\lead)$ is the ball of radius $c$ centered at the leader. 

\paragraph{Dubin's Vehicle (\dubins).}
In this scenario, both the leader and the follower vehicles follow nonlinear $\dubins$ dynamics.
This setup is used to study platooning protocols for ground and air vehicles~\cite{ribichini2003efficient,dunlap2022run,dunlap2021comparing,eller2013test,wadley2013development}.
The leader's state is given by 
the $xy$-position $(x^{\lead}, y^{\lead})$,
the heading $\yaw^{\lead}$, and
the speed $v^{\lead}$. 
The leader moves in different  circular paths defined by the radius $r$ and the speed.
%
The follower's state is  $q^{\ego} = [x^{\ego}, y^{\ego}, \yaw^{\ego}, v^{\ego}]^{\top}$.
The control input for the follower is $u = [\yawInput, a]^{\top}$, where $\yawInput$ is the heading rate and $a$ is the acceleration.
It tries to track some state ``behind'' the leader defind as $[x^{\lead}-d \cos{\yaw^{\lead}}, y^{\lead}-d \sin{\yaw^{\lead}}, \yaw^{\lead}, v^{\lead}]$, $d > 0$.
The untrusted controller ($\UC$), adapted from previous studies~\cite{fan2020fast,kanayama1990stable},
implements a Lyapunov-based feedback controller and the safety controller ($\SC$) is similar to $\UC$ except that it does not accelerate to catch-up. 
Again, the goal of the RTA is to switch between the two controllers to prevent the follower from entering $Ball_c(x^\lead)$.


\paragraph{Dubins with Obstacles ($\building$).}
This scenario is the same as $\dubins$ except that there is a rectangular obstacle in the follower's path, and the safety controller ($\SC$), when activated, tracks a different reference point in order to avoid collision with the obstacles. The unsafe set, accordingly, includes both collision with the leader and the obstacles.

\paragraph{Ground and Air Collision Avoidance ($\groundCollision$).}
The leader and the wingman fly through  3D space with multiple ground and building obstacles. The six dimensional state of the leader is  $q^{\lead} = [x^{\lead}, y^{\lead}, z^{\lead}, \yaw^{\lead}, \pitch^{\lead}, v^{\lead}]^{\top}$,
where $z^{\lead}$ is the altitude and $\pitch^{\lead}$ is the pitch. The leader follows certain circular or elliptical orbits in  space. Dynamics are given in the Appendix.

%
%
The input for the follower is  $u = [\yawInput, \pitchInput, a]$, where $\yawInput$ is the heading rate, $\pitchInput$ is the pitch rate, and $a$ is the acceleration.
The follower  attempts  to track a reference point that is $d$-behind and $\delta$-below the leader. 
%
The untrusted controller is similar to the ones used in $\dubins$ except that it now includes a $\pitchInput$-component: 
\begin{equation}
    u_{\UC}(q^{\ego}, q^{\ell}) = \begin{bmatrix}
        \yawInput \\ \pitchInput \\ a^c
    \end{bmatrix} = \begin{bmatrix}
        \yawInput^{\lead} + v^{\ell}(k_1 \err_y + k_2 \sin{(\yaw^{\ell} - \yaw^{\ego})} \\
        k_4 (\pitch^{\rf} - \pitch^{\ego}) \\
        k_3(v^{\ell} - v^{\ego})
    \end{bmatrix},
\end{equation}
where $\err_y$ is the $y$-error,
$\yawInput^{\rf}$ is the heading rate of the leader, and
$\pitch^{\rf}$ is the angle from the follower's current position to  the reference point. 
The safety controller is similar to the untrusted controller, but it instead tracks the state $[x^{\lead} - d \cos{\yaw^{\lead}}, y^{\lead} - d \sin{\yaw^{\lead}}, z^{\lead}, \yaw^{\lead}, \pitch^{\lead}, v^{\lead}]$ and, similar to the $\dubins$ and $\building$ safety controllers, sets the acceleration to $0$.
%
Once again, The goal of the RTA is to switch between the untrusted and safety controllers such that the follower does not collide with the leader or any of the $n$ polytopic obstacles.
%
An example of $\air$ can be seen in Fig.~\ref{fig:cagas}.

\paragraph{Multi-agent Dubins ({\sf {\em Fleet}}).}
The final scenario is a generalization of the $\building$ scenario in which the single wingman agent is replaced by a group of four follower aircraft.  The followers' tracking points are set in a V pattern, with the safety controller tracking a point farther away from the leader.  For each agent, the unsafe set consists of the ball around the leader, the ball around each other agent, and the obstacles.  Here, the RTA is a single centralized\footnote{Decentralized RTA strategies can also be developed in our framework and this will be the subject of future works.} decision-making module that determines the switching logic of all the agents simultaneously. 
\begin{figure}[ht!]
    \centering
\vspace{-0.7cm}
    \includegraphics[width=0.4\textwidth, trim=0 0 0 5, clip]{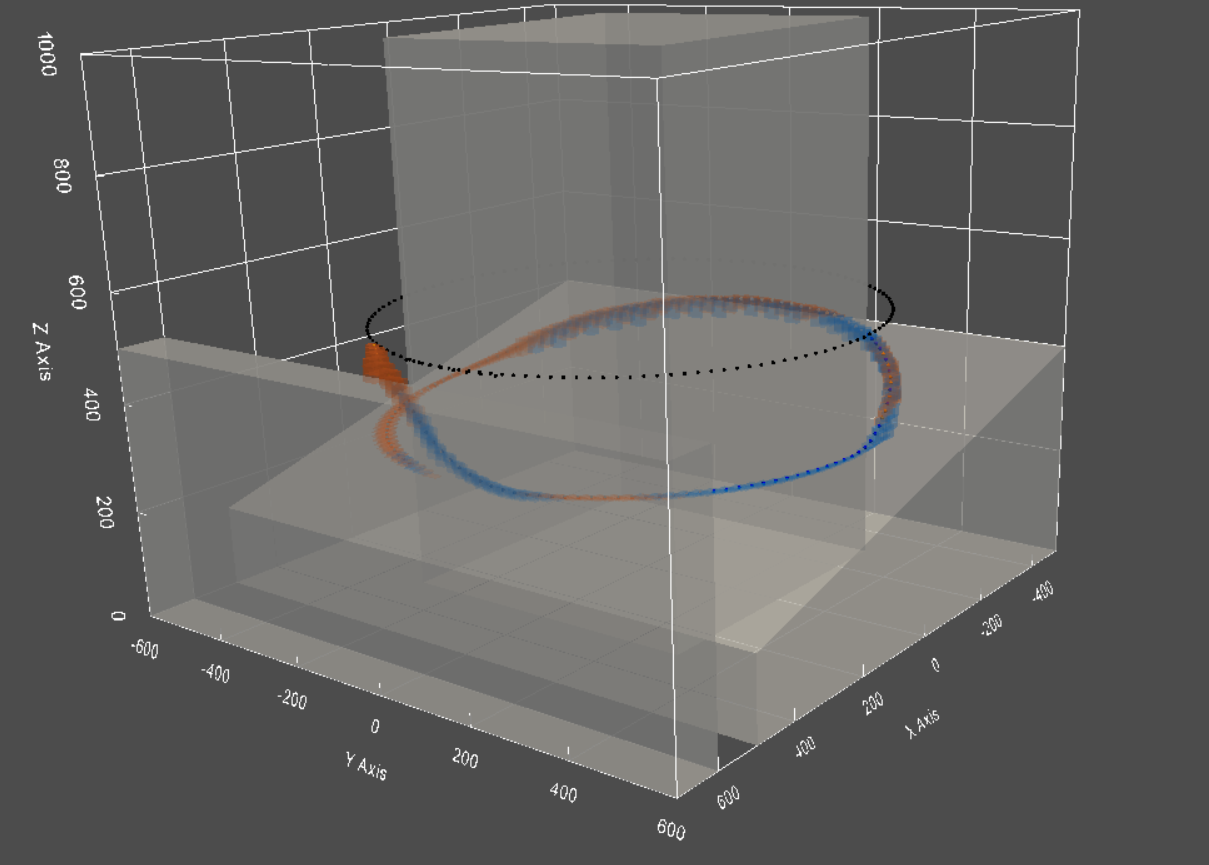} \hspace*{0.5in}
    \includegraphics[scale=0.115,width=0.4\textwidth, trim=0 165 0 5, clip]{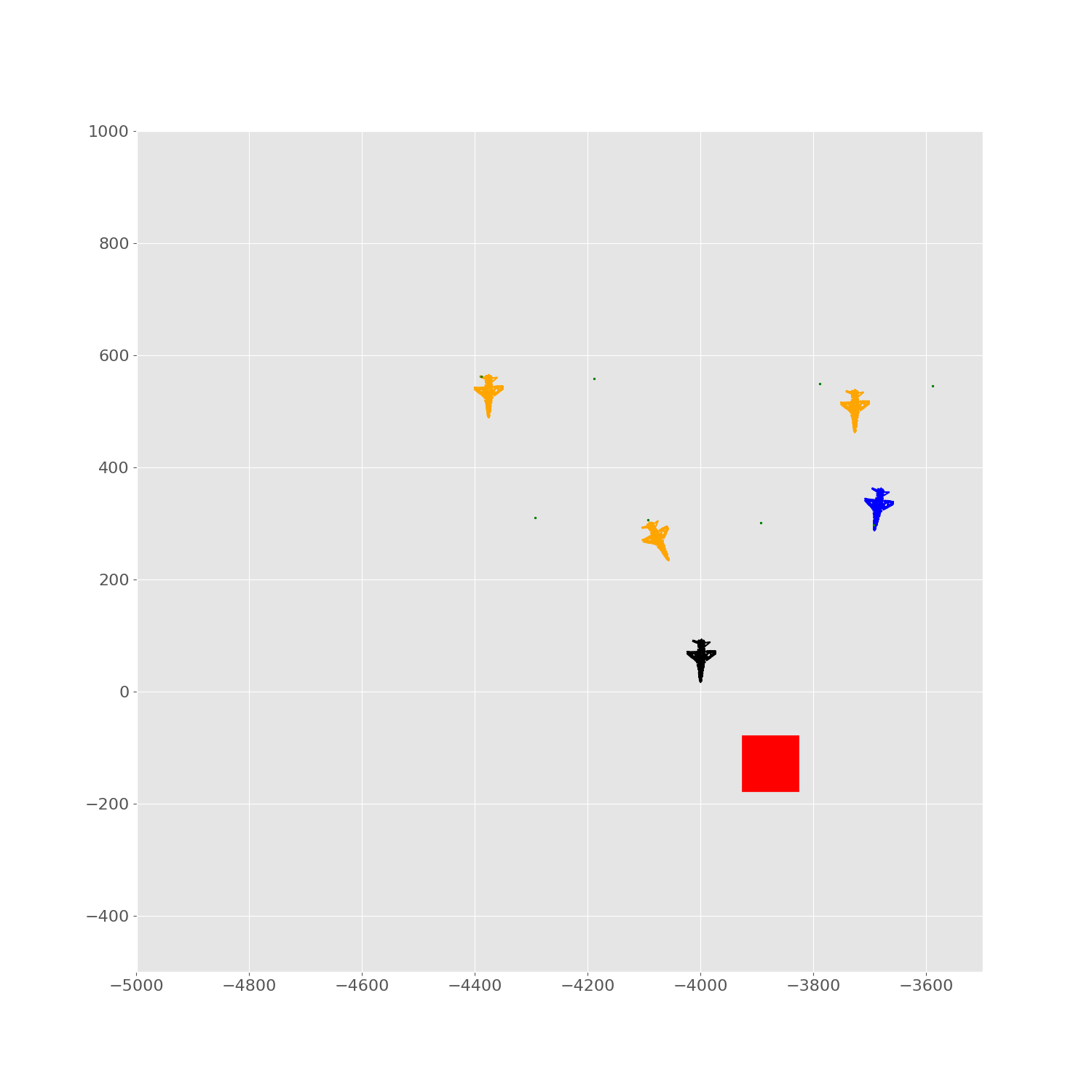}
    \caption{\small {\em Left.} An example $\groundCollision$ scenario execution. Leader trajectories (back dots), follower's reachable states (red/blue), and obstacles (gray). {\em Right}. An example {\sf Fleet\/} scenario. Leader (black) followed by four followers (blue/orange) with an obstacle (red).}
    \label{fig:cagas}
\end{figure}

\vspace{-0.75cm}
\subsection{RTA Implementations}

\paragraph{$\reachrta$ and $\simrta$ (Baselines).} 
In Section~\ref{sec:basicRS}, we discussed two popular approaches for implementing RTAs based on lookaheads. The exact current state $q$ of the system is usually not  available for computing the lookahead. Instead, a state estimator gives a set of states  $\hat{Q} \subseteq Q$ (or a mean  estimate plus error) based on sensor data. The lookahead computation $\D(\hat{Q})$ propagates this set and checks intersections with a set $P$. For a longer lookahead time $T$, the intersection with $P$ is  checked with the reachable states $\reach^\UC(\hat{Q},T)$ from $\hat{Q}$. 
%
We will use two  implementations of this lookahead strategy: (a) $\reachrta$  uses the Verse library~\cite{Verse23} for over-approximating the reachable sets, and (b) $\simrta$  uses simulations from $\hat{Q}$ to estimate the reachable sets. The latter is lightweight but does not give an over-approximation of the reachable sets.  
%
As discussed in  Section~\ref{sec:basicRS}, using a recoverable set for the set $P$ guarantees safety but not optimality, and using the safe set $Q\setminus \B$ does not even guarantee safety. In general, computing the recoverable set can be hard, and we use the safe set; however, for the $\acc$ scenarios, we have been able to compute an analytical solution for the recoverable set, and that is what we use in the experiments.

 

\paragraph{$\rlrta$.}
In order to train the $\rlrta$, we used the SafeRL reinforcement learning framework~\cite{Hobbs-SafeRL22}, an extension of OpenAI gym~\cite{brockman2016openai}, to gather observation-action pairs.  Episodes lasted either for a fixed number of time steps (150 steps for the $\acc$ scenario, and 400 for the others) or until the a safety constraint was violated.
The Ray reinforcement learning library was then used to train models based on the observations-action pairs  using proximal policy optimization (PPO)~\cite{PPO}.  

The models are neural nets consisting of two fully connected hidden layers with 512 units each, except for the \dubins~example which required 1024 units to achieve a safe policy.  The models were trained for 20 training iterations, each consisting of  a batch of 200,000 simulation time steps.  The models were then evaluated after each training iteration, and from the safe models, the model that performed best at the objective was selected.  Four learning hyperparameters were optimized using Ray's hyperparameter tuner, yielding values of 0.995 for $\lambda$ and $\gamma$, 150 for the number of gradient descent iterations per training step, and 75 for the minibatch size.
Models were trained with two types of reward functions.  In the first, a reward of 1 was given in each time step that the unverified controller was used.  In the second, a reward of 1 was given when the agent was in a specified rejoin region.  We use the reward shaping strategy from Proposition 2 to generate the penalty for violating safety constraints\footnote{Having an explicit form for the necessary penalty is of great value (e.g., in determining the necessary data type to use in training).}.  
We replace the parameter $n$ with $T_{max}$, the maximum number of time steps in a given episode.

\subsection{Effectiveness and Scalability}
\label{sec:scale}

We compare several  RTA strategies across three variations of the single agent scenarios of Section~\ref{sec:scenarios}, created by altering the relative initial position and velocity of the follower. 
These experiments were run on a 2.4 GHz Quad-Core Intel i5 processor with 8 GB RAM.
Variation 1 is the variation that $\rlrta$  was trained on; Variations 2 and 3 make the initial conditions of the follower different from the training scenarios.
%
For each RTA and scenario, we report
(a) the average running time (RT) of the RTA switch decision in milliseconds, 
(b) the percentage of time the untrusted controller is used $\UC \%$ over the entire run, and 
(c) the minimum time to collision (TTC) as a measure of safety~\cite{ttc} with the unsafe set  $\B$ over the whole run. A negative TTC indicates violation of safety.
The results are reported in Table~\ref{tab:inc-exp}.

{\em Safety violations are possible but rare.} First, even in the safe scenarios, the TTC is small. This is because these scenarios were constructed to allow the follower to track the leader up-close and at relatively high speeds. 
Second, it is common to find scenarios where  $\simrta$ and $\reachrta$ violate safety. This is not surprising given Example~\ref{ex:lookahead}, and these scenarios give concrete and more realistic instantiations.
$\rlrta$ is always safe in these scenarios, even though the varied initial conditions were not part of the training set.
%

{\em $\rlrta$ can improve utilization of the untrusted controller over $\reachrta$ and $\simrta$, often by more than 50\%.} The improvements are more drastic from $\reachrta$, which is considered to be state-of-the-art with regards to safety. These results illustrate that our approach addresses the well-known conservativeness problem of RTAs.  In the {\building} and {\sf Fleet} scenarios, the $\reachrta$ was able to achieve higher utilization, indicating that, in those cases, the training of $\rlrta$ yielded a safe, but not optimal, policy, indicating that improvements to the training process and model definition are possible.
The improvements in $\acc$ are larger than the improvements in the other scenarios as the reachable set computation is more conservative with respect to the unsafe sets than in the other scenarios.


{\em Running time increases with plant dimension, but $\rlrta$ scales best}. 
The plant models for $\acc$, $\dubins$, and $\air$ are 4-, 8-, and 12-dimensional, respectively.
For a given scenario, $\rlrta$ is generally faster than  $\reachrta$, and it can be 40 times faster for large models. At runtime, $\rlrta$ switching policy $\pi$ involves running inference on a neural network, whereas $\reachrta$ involves computing the reachable states of a nonlinear model. As such, the runtime of $\rlrta$ increases as the required complexity of the policy increases, which is not necessarily correlated with increased scenario complexity.


\vspace{-0.75cm}
\begin{table}[!h]
\centering
\caption{\small Running time (RT), time to collision (TTC) with unsafe set $\B$, and untrusted controller usage ($\UC$\%) for different RTA strategies and across scenarios.}
\vspace{0.2mm}
\label{tab:inc-exp}
\begin{tabular}{|c|l|c|c|c|c|c|c|c|c|c|}
\hline
\scriptsize 
    &  & \multicolumn{3}{c|}{$\simrta$}  & \multicolumn{3}{c|}{$\reachrta$}   & \multicolumn{3}{c|}{$\rlrta$}  \\ \hline
    \multicolumn{2}{|c|}{}  &  RT (ms) & \makecell{ TTC \\ (ms) } & $\UC 
 \%$ & RT (ms) & \makecell{ TTC \\ (ms) } & $\UC \%$  & RT (ms) & \makecell{ TTC \\ (ms) } & $\UC \%$  \\ \hline
    \acc
                           & Var 1            & 0.92 & 335 & 69.79 & 4.88 & 123 & 12.08 & 3.58 & 680 & 54.3 \\
                           & Var 2   & 0.96 & -82 & 65.77 & 10.99 & -19x$10^3$ & 14.76 & 3.71 & 617 & 64.9  \\
                           & Var 3   & 0.90 & -576 & 67.11 & 8.35 & 617 & 18.79 & 3.85 & 630 & 72.2 \\
    \hline
    \dubins
                           & Var 1           & 6.84 & 122 & 69 & 39.91 & 153 & 35.58 & 3.33 & 86 & 53.3   \\
                           & Var 2 & 7.07 & 111 & 67 & 37.54 & 154 & 19.93  & 3.35 & 90 & 49.0 \\
                           & Var 3 & 7.24 & 115 & 65 & 37.95 & 179 & 23.92 & 3.63 & 104 & 48.3  \\
    \hline
    \building
                           & Var 1           & 26.72 & 47 & 51.62  & 97.20 & 32 & 77.87 & 3.37 & 194 & 52.75   \\
                           & Var 2           & 27.60 & 62 & 51.32  & 93.11 & 76 & 76.40 & 3.70 & 179 & 48.8   \\
                           & Var 3           & 28.17 & 34 & 51.62  & 91.53 & 40 & 76.40 & 3.38 & 208 & 54.0  \\
    \hline
    \groundCollision
                           & Var 1          & 361 & 373 & 66.15  & 192.84 & 506 & 64.61 & 3.78 & 639 & 89.0 \\
                           & Var 2 & 375 & 373 & 66.15  & 178.46 & 482 & 64.61 & 3.28 & 647 & 86 \\
                           & Var 3 & 362 & 374 & 66.15  & 183.78 & 530 & 66.15 & 3.31 & 667 & 88.8 \\
    \hline
    {\sf Fleet\/}
                           & Var 1          &  &  &   & 876 & 0.95 & 53.8 & 6.24 & 212 & 46.7 \\
    \hline
\end{tabular}
\end{table}


\vspace{-0.75cm}
\subsection{Different Rewards and Goals} 
\label{sec:rewards}

Our reward shaping approach for RTA enables us to start with different rewards, say $r_1$ and $r_2$, on the same plant model $\M$, and shape them both to make the optimal policies safe. This enables designers to play with different control objectives and goals and apply the same method for uniformly  achieving safety. 

We report on experiments with two different rewards: (a) $r$, which is a baseline reward that rewards staying within a zone around the reference point (behind the leader), and 
(b) $r_\UC$, which rewards  using the untrusted controlled $\UC$.
%
We test the learned switching policies in 100 scenarios with randomized parameters.
The results from the $\rlrta$ trained using these two different rewards is compared to the baseline results from $\reachrta$.
The following metrics are collected and averaged over the 100 trials run:
(a) $\UC$\%, as before, is the percentage of time the untrusted controller is used;
(b) Mean dist is the mean distance of the follower to leader (or the closest aircraft, in the multi-agent case); and
(c) Fail\%  is the percentage of scenarios where safety is violated.

In the multi-agent case, reward $r$ RTA yielded slightly higher utilization than the $r_{UC}$ RTA. Similar behavior occurred in the $\building$ scenario, where there was little difference between the utilization achieved by the different RTAs.  This indicates that little optimization of the rewards was possible in these scenarios and that the learned policy is almost entirely determined by the safety requirement, allowing $\reachrta$ to outperform it on these tasks.
\vspace{-0.75cm}
\begin{table}[h!]
\centering
\small
\caption{\small Running time, safety violations, and untrusted controller usage for different RTA strategies and across scenarios. }
\vspace{0.2mm}
\label{tab:inc-exp}
\begin{tabular}{|c|c|c|c|c|c|c|c|}
\hline
\multirow{2}{5.7em}{Scenario} & \multirow{2}{5.7em}{Reward} & \multicolumn{3}{c|}{$\rlrta$} & \multicolumn{3}{c|}{$\reachrta$} \\ \cline{3-8}
&  & $\UC$\% & Mean dist & Fail\%  & $\UC$\% & Mean dist & Fail\% \\ \hline
{\acc} & $r_{\UC}$ & 65.3 & 12.3 & 0 & 19.5& 19.5 & 0 \\
& $r$  & 5.6 & 11.6 & 0 & &  & \\
\hline
{\dubins} & $r_{\UC}$ & 56.6 & 176.9 & 0 & 7.44& 99.00 & 0 \\
& $r$  & 13.3 & 222.8 & 0 & &  & \\
\hline
{\building} & $r_{\UC}$  & 52.5 & 376.6 & 0 & 77.86 & 338.55 & 0 \\
& $r$  & 48.2 & 372.7 & 0 & &  &  \\
\hline
{\air} & $r_{\UC}$ & 89.0 & 391.1 & 0  & 48.07 & 651.22 & 0 \\
& $r$  & 59.5 & 387.5 & 0 & &  & \\
\hline
{\sf Fleet\/} & $r_{\UC}$ & 46.8 & 275.6 & 0  & 53.8 & 404.2 & 0 \\
& $r$  & 47.2 & 275.8 & 0 & &  & \\
\hline
\end{tabular}
\end{table}


\vspace{-0.75cm}
\section{Conclusion and Discussions}
\label{sec:conc}

In this paper we presented a novel formulation of the RTA problem, namely, as a problem of identifying a policy that maximizes certain rewards (like the use of the untrusted controller) while guaranteeing safety of the system. We showed how this problem can be reduced to synthesizing policies that are optimal with respect to only one objective function through reward shaping. The reduction allows one to use scalable techniques like reinforcement learning that are \emph{model-free} to solve the RTA problem. This approach overcomes weaknesses of traditional approached to solving the RTA problem that may either fail to guarantee safety or not be optimal.
We have implemented the reward shaping-based RTA design procedure using SafeRL and OpenAI gym frameworks, and our experiments show that this $\rlrta$ can indeed find safe switching policies  in complex scenarios. In most scenarios, $\rlrta$  improves the utilization of the untrusted controller significantly over reachability-based RTAs. These results show promise and suggest new directions for research for finding reward shaping strategies for  other hard constraints (like reachability)  in RTA systems.

\rmv{
The RTA problem is to design an effective strategy that chooses between an experimental but untrusted controller and a safe, reliable, but potentially conservative controller, at each step during an execution. The goal is to identify a switching policy that maximizes some objective like the use of the untrusted controller to fully realize the advantages it offers over the safe controller, without compromising on safety. Traditional approaches to RTA design like the use of forward simulations or reachability analysis to determine the safety of an untrusted control law, unfortunately may not guarantee safety. Further since they favor a greedy approach to the use of the untrusted controller, they may not be optimal when it comes to overall usage in a scenario. In this paper, we propose a new method to RTA design where the goal is to solve a constrained optimization problem, namely, one of optimizing a reward function (like the use of the untrusted controller) while maintaining safety. While recent proposals like safe reinforcement learning~\cite{alshiekh2018safe} try to synthesize optimal but safe policies through shields, we try to reduce the problem of synthesizing policies that are optimal for only one objective function (as opposed to safety+optimality). Through reward shaping the RTA design problem is reduced to an optimal synthesis problem by changing the rewards on transitions. This allows one to use off-the-shelf, vanilla reinforcement learning tools to solve the RTA problem. 
We have implemented the reward shaping-based RTA design procedure using SafeRL and OpenAI gym frameworks, and our experiments show that this $\rlrta$ can indeed find safe switching policies  in complex scenarios. In most scenarios, $\rlrta$  improves the utilization of the untrusted controller significantly over reachability-based RTAs. These results show promise and suggest new directions for research for finding reward shaping strategies for  other hard constraints (like reachability)  in RTA systems.
}

While safety is an important requirement, it is not the only type of hard constraint that a RTA switching policy may need to satisfy. For example, we require the switching policy to reach a certain goal state while optimizing certain rewards. Or perhaps, we may require the RTA to ensure that a goal state is reached while avoiding some unsafe states, and at the same time maximize the use of the untrusted controller. In general, the RTA problem demands finding an optimal switching policy (like the one that uses the untrusted controller the most) from amongst those that satisfy a logical property perhaps expressed in some temporal logic. Solving this more general problem is unfortunately challenging.

\begin{wrapfigure}[8]{r}{0.3\textwidth}
    \centering
    \includegraphics[width=0.23\textwidth]{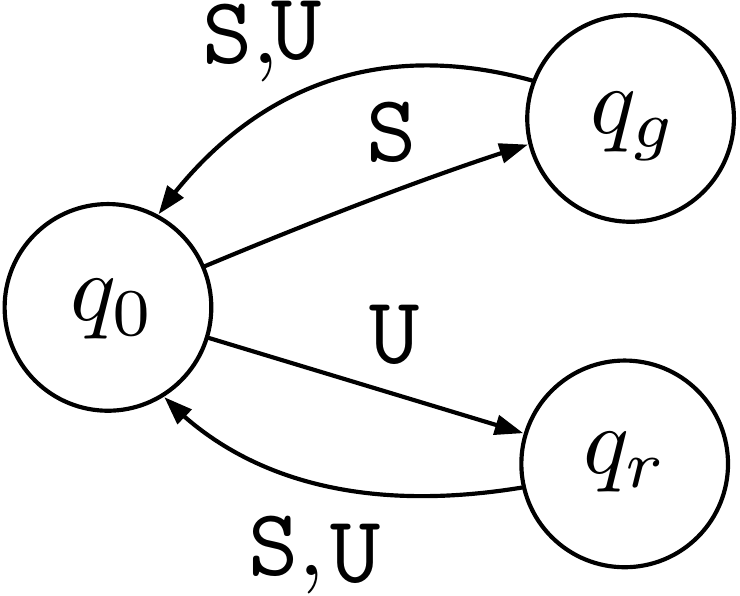}
\end{wrapfigure}
To illustrate this point, consider the {\plant} $\M$ shown on the right.
It has three states --- the initial state $q_0$, a goal state $q_g$ and a reward state $q_r$. The safe controller $\SC$ takes $\M$ from $q_0$ to $q_g$, and brings it back to $q_0$ from both $q_r$ and $q_g$. On the other hand, the untrusted controller $\UC$ takes $\M$ from $q_0$ to $q_r$ and brings it back to $q_0$ from both $q_r$ and $q_g$. Consider a reward structure $(r,\gamma)$, where $r(q,a) = 1$ if $q = q_r$ and is $0$ otherwise; in other words, reaching $q_r$ gives a reward and nothing else. Suppose our goal is to find a policy, from amongst those that reach goal state $q_g$, that maximizes the reward earned. Observe that for every $k$, there is a policy that surely reaches $q_g$ and earns reward $\frac{1}{1-\gamma} - \gamma^{2k}$ --- choose $\UC$ for the first $2k$ steps (i.e., go from $q_0$ to $q_r$ and back $k$ times), then choose $\SC$, and afterwards choose $\UC$ for the rest of the steps. Notice that these policies while deterministic, are not stationary. In fact, stationary policies that guarantee reaching $q_g$ (like choosing $\SC$ always) earn reward $0$. 

Using $\mathsf{ReachSP}(\M, \set{q_g})$ to denote the set of all switching policies that guarantee reaching the goal state $q_g$, we can observe that
\[
\forall k.\ \frac{1}{1-\gamma} - \gamma^{2k} \leq \sup_{\pi \in \mathsf{ReachSP}} r(\pi) \qquad \mbox{and} \qquad \sup_{\pi \in \mathsf{ReachSP}} r(\pi) \leq \frac{1}{1-\gamma}.
\]
This means that $\sup_{\pi \in \mathsf{ReachSP}} r(\pi) = \frac{1}{1-\gamma} = v$. But there is no switching policy that achieves this supremum value $v$! Thus, there is no optimal scheduler that guarantees the reachability objective. We have already seen that for every $\epsilon$, there is a policy that guarantees visiting $q_g$ and has reward $\frac{1}{1-\gamma} - \epsilon$. The policy described uses memory but is deterministic. One can show that we can also come up with a memoryless, but randomized policy that is $\epsilon$-optimal; the policy picks $\SC$ or $\UC$ from $q_0$ based on a distribution defined based on $\epsilon$. The existence of memoryless, randomized policies that are $\epsilon$-optimal can be proved for any {\plant} and regular property. How one can automatically design such a policy is an open problem for future investigation.

\bibliographystyle{splncs04}
\bibliography{sayan1.bib,references.bib}

\end{document}